\theoremstyle{plain}
\newtheorem{theorem}{Theorem}[section]
\newtheorem{lemma}[theorem]{Lemma}
\newtheorem{claim}[theorem]{Claim}
\newtheorem{corollary}[theorem]{Corollary}
\newtheorem{fact}[theorem]{Fact}
\providecommand{\abs}[1]{\lvert#1\rvert} \providecommand{\norm}[1]{\lVert#1\rVert}
\providecommand{\mat}[1]{\ensuremath{\mathbf{#1}}}
\newcommand{\dmax}{d_{\mathsf{max}}}
\newcommand{\deltamin}{\delta_{\mathsf{min}}}
\newcommand{\wmin}{w_{\mathsf{min}}}
\newcommand{\myin}{\mathsf{in}}
\newcommand{\myout}{\mathsf{out}}
\newcommand{\row}{\mathrm{row}}
\newcommand{\ball}{\mathsf{ball}}
\newcommand{\vol}{\mathrm{vol}}
\newcommand{\bad}{\mathrm{bad}}
\newcommand{\etal}{\emph{et al.}}
\newcommand{\y}{\mathbf{y}}
\newcommand{\vv}{\mathbf{v}}
\newcommand{\vvv}{\mathbf{v'}}
\newcommand{\f}{\mathbf{F}}
\newcommand{\xx}{\mathbf{x}}
\newcommand{\pp}{\mathbf{p}}
\newcommand{\LL}{\mathbf{\cal L}}
\newcommand{\cst}{{c}}
\renewcommand{\nu}{\ensuremath{\mathbf{g}}}
\def\polylog{\operatorname{polylog}}
\DeclareMathOperator*{\argmax}{argmax}
\DeclareMathOperator*{\mytriangle}{\triangle}
\newcommand{\eps}{\varepsilon}
\title{Spectral Concentration and Greedy $k$-Clustering}
\author{
Tamal K. Dey\thanks{
    Dept.~of Computer Science and Engineering, and Dept.~of Mathematics, The Ohio State University. Columbus, OH, 43201.
    \texttt{tamaldey@cse.ohio-state.edu}
}
\and
Pan Peng\thanks{
    Department of Computer Science, University of Sheffield.
    \texttt{p.peng@sheffield.ac.uk}
}
\and
Alfred Rossi\thanks{
    Dept.~of Computer Science and Engineering, The Ohio State University. Columbus, OH, 43201.
    \texttt{rossi.49@osu.edu}
}
\and
Anastasios Sidiropoulos\thanks{
    Dept.~of Computer Science and Engineering, and Dept.~of Mathematics, The Ohio State University. Columbus, OH, 43201.
    \texttt{sidiropoulos.1@osu.edu}
}
}
\date{}
\begin{document}
\maketitle

\begin{abstract}
A popular graph clustering method is to consider the embedding of an input graph
 into $\mathbb{R}^k$ induced by the first $k$ eigenvectors of its Laplacian, and
to partition the graph via geometric manipulations on the resulting metric
space. Despite the practical success of this methodology, there is limited
understanding of several heuristics that follow this framework. We provide
theoretical justification for one such natural and computationally efficient
variant.

Our result can be summarized as follows. A partition of a graph is called {\em
strong} if each cluster has small external conductance, and large internal
conductance. We present a simple greedy spectral clustering algorithm which
returns a partition that is provably close to a suitably strong partition,
provided that such a partition exists. A recent result shows that strong
partitions exist for graphs with a sufficiently large spectral gap between the
$k$-th and $(k+1)$-st eigenvalues. Taking this together with our main theorem
gives a spectral algorithm which finds a partition close to a strong one for
graphs with large enough spectral gap. We also show how this simple greedy
algorithm can be implemented in near-linear time for any fixed $k$ and error
guarantee. Finally, we evaluate our algorithm on some real-world and synthetic
inputs.
\end{abstract}

    
 \section{Introduction}
Spectral clustering of graphs is a fundamental technique in data analysis that
has enjoyed broad practical usage because of its efficacy and simplicity. The
technique maps the vertex set of a graph into a Euclidean space $\mathbb{R}^k$
 where a classical clustering algorithm (such as $k$-means, $k$-center) is
applied to the resulting embedding~\cite{DBLP:journals/sac/Luxburg07}. The
coordinates of the vertices in the embedding are computed from $k$ eigenvectors
of a matrix associated with the graph. The exact choice of matrix depends on the
specific application but is typically some weighted variant of $D-A$, for a
 graph with degree matrix $D$ and adjacency matrix $A$. Despite widespread usage, theoretical
understanding of the technique remains limited. For example, it is generally not
clear for which classes of graphs spectral clustering works well, or what the
structure of the subgraph induced by vertices that correspond to embedded points
from the same cluster is. Although the case for $k=2$ (two clusters) is well
understood, the case of general $k$ is not yet settled and a growing body of
work seeks to address the practical success of spectral clustering methods
\cite{DBLP:conf/nips/BalakrishnanXKS11,DBLP:journals/jacm/BiswalLR10,
DBLP:journals/siamcomp/Kelner06, DBLP:conf/nips/NgJW01,
DBLP:conf/compgeom/SpielmanT96, DBLP:journals/sac/Luxburg07}.
 
In this paper we present a simple greedy spectral clustering algorithm which is
guaranteed to return a high quality partition, provided that one of sufficient
quality {\em exists}. It first chooses $k$ clusters along with their centers
greedily from the vertices spectrally embedded in an Euclidean space. Any left
over vertex is assigned to one of the computed clusters whose center it is
closest to. The resulting partition is close in symmetric difference to the high
quality one. Our results can be viewed as providing further theoretical
justification for popular clustering algorithms such as in
\cite{DBLP:conf/nips/BalakrishnanXKS11} and \cite{DBLP:conf/nips/NgJW01}.

\paragraph*{Measuring partition quality} Intuitively, a good $k$-clustering
of a graph is one where there are few edges between vertices residing in
different clusters and where each cluster is well-connected as an induced
subgraph. Such a qualitative definition of clusters can be appropriately
characterized by vertex sets with small external conductance and large internal
conductance, which has been first formalized by Oveis Gharan and
Trevisan~\cite{DBLP:conf/soda/GharanT14}.

Let $G=(V,E)$ be an undirected unweighted graph. 
Let $\deg(v)$ 
be the degree of a vertex $v\in V$. For a subset $S \subset V$, the \emph{external
conductance} and \emph{internal conductance} are defined to be
\[
\phi_{\myout}(S; G) := \frac{|E(S, V(G) \setminus S)|}{\vol (S)}
\text{,~~~~}
\phi_{\myin}(S) := \min_{S'\subseteq S, \vol(S')\leq \frac{\vol(S)}{2}}
\phi_{\myout}(S'; G[S])
\]
respectively, where $\vol(S) = \sum_{v\in S}\deg(v)$ (called the \emph{volume}
of $S$), $E(X,Y)$ denotes the set of edges between $X$ and $Y$, and
$G[S]$ denotes the subgraph of $G$ induced on $S$. For an {\em isolated} vertex
$v$ in $G$, we assume $\phi_{\myout}(v,G)=0$ and $\phi_{\myin}(v)=1$ by
definition. Let $\phi_\myin(G):=\phi_\myin(V)$. It follows that if
$\phi_{\myin}(G)>0$, then $G$ cannot have any isolated vertex. When $G$ is
understood from context we sometimes write $\phi_{\myout}(S)$ in place of
$\phi_{\myout}(S; G)$.

We define a \emph{$k$-partition} of a graph $G$ to be a partition ${\cal
A}=\{A_1,\ldots,A_k\}$ of $V(G)$ into $k$ disjoint subsets. We say that ${\cal
A}$ is \emph{$(\alpha_{\myin}, \alpha_{\myout})$-strong}, for some
$\alpha_{\myin}, \alpha_{\myout}\geq 0$, if for all $i\in \{1,\ldots,k\}$, we
have
\[
\phi_{\myin}(A_i) \geq \alpha_{\myin}
\text{ ~~~~ and ~~~~ }
\phi_{\myout}(A_i) \leq \alpha_{\myout}.
\]

Thus a high quality partition is one where $\alpha_{\myin}$ is large and
$\alpha_{\myout}$ is small.

\paragraph*{Our contribution} We present a simple spectral algorithm which
computes a partition provably close to {\em any} $(\alpha_{\myin},
\alpha_{\myout})$-strong $k$-partition if there is large gap between
$\alpha_{\myin}$ and $\alpha_\myout$ (see Theorem~\ref{thm:main} for formal
statement). We emphasize the fact that the algorithm's output approximates any
good existing clustering in the input graph. The algorithm consists of a simple
greedy clustering procedure performed on the embedding into $\mathbb{R}^k$
induced by the first $k$ eigenvectors. We further show how to implement this
algorithm in near-linear time for any fixed $k$ and error guarantee (see
Theorem~\ref{thm:algorithm2}).

In the analysis of our algorithm, we show some interesting spectral properties
of graphs that admit strong $k$-partitions: each of the (rescaled) first $k$
eigenvectors of the Laplacian matrix of the graph is close to some vector that
is constant on each cluster; the image of each cluster concentrates around some
point in the spectral embedding, and all these points are well separated.

\paragraph*{Related work} The discrete version of Cheeger's inequality
asserts that a graph admits a bipartition into two sets of small external
conductance if and only if the second eigenvalue is small
\cite{DBLP:journals/combinatorica/Alon86, DBLP:journals/jct/AlonM85, cheeger,
DBLP:conf/focs/Mihail89, DBLP:journals/iandc/SinclairJ89}. In fact, such a
bipartition can be efficiently computed via a simple algorithm that examines the
second eigenvector. Generalizations of Cheeger's inequality have been obtained
by Lee, Oveis Gharan, and Trevisan \cite{DBLP:conf/stoc/LeeGT12}, and Louis
\etal~\cite{DBLP:conf/stoc/LouisRTV12}. They showed that spectral algorithms can
be used to find $k$ disjoint subsets, each with small external conductance,
provided that the $k$-th eigenvalue is small. An improved version of Cheeger's
inequality has been obtained by Kwok \etal~\cite{DBLP:conf/stoc/KwokLLGT13} for
graphs with large $k$-th eigenvalue.

Even though the clusters given by the above spectral partitioning methods have
small external conductance, they are not guaranteed to have large internal
conductance. In other words, for a resulting cluster $C$, the induced graph
$G[C]$ might admit further partitioning into sub-clusters of small conductance.
Kannan, Vempala and Vetta proposed quantifying the quality of a partition by
measuring the internal conductance of clusters
\cite{DBLP:journals/jacm/KannanVV04}. Allen Zhu, Lattanzi and
Mirrokni~\cite{DBLP:conf/icml/ZhuLM13} and Orecchia and Allen Zhu~\cite{DBLP:conf/soda/OrecchiaZ14}
studied local algorithms for extracting subsets with small external conductance
under the assumption that subsets with small external conductance and high
(internal) connectivity exist.

One may wonder under what conditions a graph admits a partition which provides
guarantees on both internal and external conductance. Oveis Gharan and Trevisan,
improving on a result of Tanaka~\cite{Tanaka:2011vk}, showed that graphs which
have a sufficiently large spectral gap between the $k$-th and $(k+1)$-st
eigenvalues (denoted as $\lambda_k$ and $\lambda_{k+1}$, respectively) of its
Laplacian admit a strong $k$-partition~\cite{DBLP:conf/soda/GharanT14} (see
Theorem~\ref{thm:OT_clustering}). Czumaj et al.~\cite{DBLP:conf/stoc/CzumajPS15} recently
proposed a sublinear algorithm for testing if a graph with bounded maximum
degree has an $(\alpha_{\myin},\alpha_{\myout})$-strong partition in the
framework of property testing, assuming there is some gap between
$\alpha_{\myin},\alpha_{\myout}$.

\paragraph*{Follow-up work} Subsequent to the original ArXiv submission
\cite{DBLP:journals/corr/DeyRS14} of this paper, Peng, Sun, and Zanetti
\cite{DBLP:conf/colt/PengSZ15}, Awasthi et al.~\cite{DBLP:conf/innovations/AwasthiCKS16}
and  Sinop~\cite{DBLP:conf/soda/Sinop16} have derived spectral algorithms with weaker
assumption on the gap between $\alpha_{\myin}$ and $\alpha_{\myout}$ (or some
related gap, e.g., $\lambda_{k+1}$ and $\alpha_{\myout}$) to cluster the
vertices of the graph. The clustering algorithm analyzed in this paper remains
distinct from this body of work. For example, in
\cite{DBLP:conf/colt/PengSZ15} the authors applied $k$-means clustering to
the spectral embedding by the first $k$ eigenvectors; and show that the
resulting algorithm is able to find $k$ sets each of which is close to one
cluster of a strong $k$-partition and has bounded small external conductance,
under the assumption that $\lambda_{k+1}/\alpha_{\myout}=\Omega(k^3)$. (In
contrast, our assumption is $\alpha_{\myin}^2/\lambda_k=\Omega(k^3\dmax^3)$ for
graphs with maximum degree at most $\dmax$; see Theorem~\ref{thm:main}.) Their
error guarantee ultimately depends on the approximation factor afforded by a
$k$-means algorithm. Unfortunately, $k$-means is very sensitive to the initial
choice of $k$ centers and it is NP-hard to approximate to within some constant
factor~\cite{DBLP:conf/compgeom/AwasthiCKS15}. They also gave a heat-kernel
based algorithm that runs in near-linear time, which seems to be unappealing for
implementation. In \cite{DBLP:conf/innovations/AwasthiCKS16}, the authors proposed an algorithm
that iteratively applies the $k$-means clustering on the resistive embedding
projected onto the first $k$ eigenvectors, and outputs a $k$-partition such that
each part is close to one set in a target partition, under the assumption that
the ratio between the algebraic expansion of the clusters and $\alpha_{\myout}$
is $\Omega(k)$.   Sinop~\cite{DBLP:conf/soda/Sinop16} gave another spectral algorithm assuming
that $\lambda_{k+1}/\alpha_{\myout}=\Omega(1)$. In each case,
\cite{DBLP:conf/innovations/AwasthiCKS16,DBLP:conf/colt/PengSZ15,DBLP:conf/soda/Sinop16} use either a
different measure of the difference of the output partition and the target
partition, or a different definition of conductance (see our remark below
Theorem~\ref{thm:main}).

\paragraph*{Outline} Section~\ref{sec:greedy-k-clustering} contains a
description of the greedy $k$-clustering algorithm and the statement of our main
theorem. In Section~\ref{sec:concentration} we show a spectral concentration
property for any graph that admits a high quality partition. Building on this
property, we argue that the image of each cluster concentrates around some point
in the spectral embedding and these points are well separated. The complete proof of the main theorem is then given in Section~\ref{sec:clustering}.
In Section~\ref{fast-algo}, we give a randomized version of the algorithm which runs
in time $\tilde{O}(mk+\eps^{-1} k^3 n)$ for any error parameter $\eps>0$.
Finally, we present some experimental results in \ref{sec:experiments}.

\section{Greedy \texorpdfstring{$k$}{k}-Clustering}\label{sec:greedy-k-clustering} Let $G$ be an
undirected unweighted graph with $n$ vertices, and let $\mat{\LL}_G = \mat{I} -
\mat{D}^{-1/2} \mat{A} \mat{D}^{-1/2}$ be its normalized Laplacian, where
$\mat{A}$ is the adjacency matrix of $G$ and $\mat{D}$ is a diagonal
matrix with the entries $\mat{D}(v,v)$ equal to the degree of vertex
$v$. Let $0=\lambda_1\leq \lambda_2\leq \ldots \leq \lambda_n$ be the
eigenvalues of $\LL_G$, and $\vv_1,\vv_2 \ldots,\vv_n \in \mathbb{R}^n$ a
corresponding collection of orthonormal left eigenvectors\footnote{We will use
$\xx$ to denote a row vector and $\xx^T$ to denote a column vector.}. Note that
by the variational characterization of eigenvalues,
$\frac{\vv_i\LL_G\vv_i^T}{\vv_i\vv_i^T}=\lambda_i$ for $1\leq i\leq n$
(see~\cite{Chu97:spectral}).

In this paper we consider a simple geometric clustering operation on the
embedding $\f(u)$ which carries a vertex $u$ to a point given by a rescaling of
the first $k$ eigenvectors of $\LL_G$,
\begin{equation}\label{eq:embedding}
  \f(u) =  \deg(u)^{-1/2} \left(\vv_1(u), \ldots, \vv_k(u)\right).
\end{equation}
%
For any $U
\subseteq V(G)$, let $\f(U)$ denote all the embedded points corresponding to
vertices in $U$, that is, $\f(U) = \{\f(u) : u \in U\}$. For any set $B\subseteq
\mathbb{R}^k$, let $\f^{-1}(B): = \{v\in V(G): \f(v)\in B\}$. For any point $\xx
\in \mathbb{R}^k$ and real number $R\geq 0$, let $\ball(\xx,R):=\{\y:
\norm{\y-\xx}_2 \leq R \}$.

\paragraph{Intuitive description}
The algorithm takes as input a graph $G$, and a desired number of clusters, $k$.
The algorithm uses the bottom $k$ eigenvectors $\vv_1, \ldots, \vv_k$ of $\LL_G$ to compute the embedding $\mathcal{F} = \f(V(G))$ of $G$ into $\mathbb{R}^k$.
Next, it begins an iterative process of searching for regions of the embedding containing many points from $\mathcal{F}$, and removing them to form clusters. To do so, it first computes a distance threshold $R = R(k, G) = R(k, n, \dmax)$, where $n = |V(G)|$ and $\dmax$ represent an upper bound of the maximum degree.
Using this threshold, the algorithm looks for a point $p \in \mathcal{F}$ such that the number of near-by points of $\mathcal{F}$ (points of $\mathcal{F}$ which fall within a radius of $2R$ of $p$) is maximized.
The vertices corresponding to these points (including $p$) are made into a cluster, and $p$ is remembered as the location of the cluster in the embedding. Next, $p$ and its near-by points are removed from $\mathcal{F}$.
This iterative process continues either for $k$ iterations, or until there are no points of $\mathcal{F}$ left in the embedding.
Afterward, any remaining points of $\mathcal{F}$ are thought of as ``outliers'', and each has its corresponding vertex assigned to a nearest cluster.

A more formal description of the algorithm appears in Figure \ref{fig:algorithm}. 
In Section~\ref{fast-algo}, we show how the algorithm can be implemented in time $\tilde{O}(mk+\eps^{-1} k^3 n)$ for any error parameter $\eps>0$, where $m$ denotes the number of edges of the graph and $\tilde{O}(\cdot)$ hides $\polylog n$ factors.

%
%

\begin{figure}
\begin{center}
\begin{tabularx}{\textwidth}{l}
\toprule
\textbf{Algorithm: Greedy Spectral $k$-Clustering}\\
\textbf{Input: $n$-vertex graph $G$}\\
\textbf{Output: Partition ${\cal C}=\{C_1,\ldots,C_k\}$ of $V(G)$}\\
\midrule
Let $\vv_1,\ldots,\vv_k$ be the $k$ first eigenvectors of $\LL_G$. \\
Let $\f:V(G)\to \mathbb{R}^k$, where $\f(u)=\deg(u)^{-1/2}\left(\vv_1(u), \ldots, \vv_k(u)\right)$. \\
$R=\frac{1}{36k\dmax\sqrt{n}}$ \\
$V_0 = V(G)$ \\
for $i=1,\ldots,k$ \\
~~~~~~~~$u_i = \argmax_{u\in V_{i-1}}|\ball(\f(u),2R)\cap \f(V_{i-1})| $\\
~~~~~~~~~~~~$= \argmax_{u\in V_{i-1}}|\{w\in V_{i-1} : \|\f(u)-\f(w)\|_2 \leq 2R\}|$ \\
~~~~~~~~$P_i = \f^{-1}(\ball(\f(u_i),2R))\cap V_{i-1}$ \\
~~~~~~~~$V_i = V_{i-1} \setminus P_i$ \\
Let $\nu: V_k\rightarrow \{u_1,\cdots,u_k\}$, $\nu(v)=u_i$ if $i$ is the smallest index \\
~~~~~~~~~~~~satisfying $\|\f(v)-\f(u_i)\| \leq \|\f(v)-\f(u_j)\|$ for all $j\not=i$. \\
Return $\{C_1,\ldots,C_k\}=\{P_1\cup \nu^{-1}(u_1),\ldots,P_k\cup \nu^{-1}(u_k)\}$ \\
\bottomrule
\end{tabularx}
\end{center}
\caption{The greedy spectral $k$-clustering algorithm takes an $n$-vertex graph
$G$ with maximum degree at most $\dmax$ as input, and
outputs a partition ${\cal C}=\{C_1,\ldots,C_k\}$ of $V(G)$.
\label{fig:algorithm}}
\end{figure}

To measure the performance of our algorithm, we introduce the following notion
of symmetric difference between two collections, each of $k$ sets, that
generalizes the symmetric difference between two sets.

\paragraph*{A distance on $k$-partitions} For two sets $Y,Z$, their
symmetric difference is given by $ Y\mytriangle Z = (Y\setminus Z) \cup
(Z\setminus Y). $ Let $X$ be a finite set, $k\geq 1$, and let ${\cal
A}=\{A_1,\ldots,A_k\}$, ${\cal A}'=\{A_1',\ldots,A_k'\}$ be collections of
disjoint subsets of $X$. Then, we define a distance function between ${\cal A}$,
${\cal A}'$, by
\[
|{\cal A} \mytriangle {\cal A}'|
= \min_{\sigma} \sum_{i=1}^k \left|A_i \triangle A_{\sigma(i)}'\right|,
\]
where $\sigma$ ranges over all permutations $\sigma$ on $\{1,\ldots,k\}$.

\subsection{Main Theorem}

\begin{theorem}[Spectral partitioning via greedy $k$-clustering]\label{thm:main}
Let $G$ be an $n$-vertex graph with maximum degree at most $\dmax$. Let $k\geq 1$, and let ${\cal
A}=\{A_1,\ldots,A_k\}$ be any $(\alpha_{\myin}, \alpha_{\myout})$-strong
$k$-partition of $V(G)$ with $\alpha_{\myin}> 10 (k\dmax)^{3/2}\cdot
\sqrt{\lambda_k}$. Then, on input $G$, the algorithm in Figure
\ref{fig:algorithm} outputs a partition ${\cal C}$ such that
\[
  |{\cal A} \mytriangle {\cal C}| = O\left(\frac{\lambda_k\cdot \dmax^3k^5\cdot n}{\alpha_{\myin}^2}\right).
\]
\end{theorem}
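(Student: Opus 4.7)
My plan is to show that the spectral embedding $f$ collapses each cluster $A_i$ of the strong partition $\mathcal{A}$ near a single point $\mu_i\in\mathbb{R}^k$, and then verify that the greedy balls of radius $2R$ faithfully recover the resulting concentrated clusters. The starting point is a spectral concentration estimate. Because $f(u)_j=\vv_j(u)/\sqrt{\deg_G(u)}$, the Rayleigh-quotient identity for the normalized Laplacian $\LL_G$ yields
\[
\sum_{(u,w)\in E(G)} \|f(u)-f(w)\|^2 \;=\; \sum_{j=1}^k \lambda_j \;\leq\; k\lambda_k,
\]
and the same bound holds when the sum is restricted to edges of any induced subgraph $G[A_i]$. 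Combined with $\phi_{\myin}(A_i)\geq \alpha_{\myin}$, Cheeger's inequality on $G[A_i]$ gives $\mu_2(G[A_i])\geq \alpha_{\myin}^2/2$, and a Poincar\'e inequality applied componentwise to $f$ then forces
\[
\sum_{u\in A_i} \|f(u)-\mu_i\|^2 \;\leq\; \frac{2k\lambda_k}{\alpha_{\myin}^2},
\]
where $\mu_i$ is the degree-weighted centroid of $f(A_i)$; a mild loss in converting between $d_{G[A_i]}$ and $d_G$ will be absorbed into a $\dmax$ factor in the final bound.

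Next, I call $u\in A_i$ \emph{bad} if $\|f(u)-\mu_i\|>R$. Markov's inequality combined with the choice $R=1/(26\dmax\sqrt{nk})$ yields at most $O(\dmax^2 nk^2\lambda_k/\alpha_{\myin}^2)$ bad vertices per cluster. I then show the centroids are pairwise separated by much more than $4R$. Writing $\mu_i=\pi_i/\sqrt{\vol(A_i)}$ with $\pi_i=(\langle\chi_i,\vv_1\rangle,\ldots,\langle\chi_i,\vv_k\rangle)$ for $\chi_i(v)=\1_{v\in A_i}\sqrt{\deg(v)/\vol(A_i)}$, the orthonormality of the normalized degree indicators $\chi_i$ propagates, via the concentration bound above, to near-orthonormality of the $\pi_i$'s. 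This gives $\|\mu_i-\mu_j\|^2\gtrsim 1/\vol(A_i)+1/\vol(A_j)\geq 2/(n\dmax)$, whereas $4R$ is smaller by a factor of $\Theta(\sqrt{k\dmax})$. The hypothesis $\alpha_{\myin}\geq 9(k\dmax)^{3/2}\sqrt{\lambda_k}$ is calibrated precisely so that this separation comfortably exceeds the greedy radius.

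The final step is an induction on the greedy iterations. At step $i$, some cluster $A_{\sigma(i)}$ is still unassigned, and its good part $\tilde{A}_{\sigma(i)}\cap V_{i-1}$ is contained in $\ball(\mu_{\sigma(i)},R)$; hence any good vertex of this cluster is a candidate center with $|N_i(\cdot)|\geq|\tilde{A}_{\sigma(i)}\cap V_{i-1}|$. By the separation, the greedy maximizer $u_i$ must itself lie within distance $R$ of some $\mu_{\sigma(i)}$ (otherwise its radius-$2R$ ball would meet only bad points and would be too small), and then $\ball(f(u_i),2R)$ captures the good part of exactly one unused cluster. Thus $C_i$ agrees with $A_{\sigma(i)}$ except on bad vertices, and summing over $i$ gives $|\mathcal{A}\mytriangle\mathcal{C}|=O(\lambda_k\dmax^3k^4 n/\alpha_{\myin}^2)$. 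The main technical obstacle is the separation argument: it must control the projections $\pi_i$ using only the $\lambda_k$-type hypothesis of the theorem, rather than the stronger $\lambda_{k+1}$-gap condition used in related work.
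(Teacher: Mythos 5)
Your overall skeleton matches the paper's proof: a per-cluster Poincar\'e/Cheeger bound (this is Lemma~\ref{lem:clusterable-eigenvector} and Lemma~\ref{lem:uniformity}), a Markov-type count of vertices lying more than $R$ from their cluster's representative point (exactly how the paper bounds $|A_\bad|$ inside Lemma~\ref{lem:sep}), a separation statement for the representative points, and a greedy analysis. Your separation step is the one place where you genuinely depart from the paper: you argue near-orthonormality of the projections $\pi_i$ of the normalized indicators $\chi_i$ onto $\mathrm{span}(\vv_1,\ldots,\vv_k)$, whereas the paper uses a matrix-perturbation argument (Weyl's inequality comparing $XX^T$ with its cluster-averaged version, plus a replacement trick that produces a singular Gram matrix). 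Your route does go through using only the concentration bound: with $\zeta=O(k\lambda_k\dmax/\alpha_{\myin}^2)$, each $\vv_l$ lies within $\sqrt{\dmax\zeta}$ of $\mathrm{span}(\chi_1,\ldots,\chi_k)$, so the $k\times k$ matrix $\bigl(\langle\vv_l,\chi_i\rangle\bigr)$ has all singular values at least $\sqrt{1-k\dmax\zeta}$, which yields $\|\mu_i-\mu_j\|^2\gtrsim 1/\vol(A_i)+1/\vol(A_j)\geq 2/(n\dmax)$; under $\alpha_{\myin}\geq 9(k\dmax)^{3/2}\sqrt{\lambda_k}$ this exceeds the required threshold by a factor $\Theta(\sqrt{k\dmax})$. (One small correction: the threshold you must beat is $6R$, not $4R$, since a $2R$-ball can meet two $R$-cores whose centers are up to $6R$ apart; your margin still covers this.) The weighted-versus-unweighted centroid and $G[A_i]$-versus-$G$ degree conversions you flag are indeed absorbable into $\dmax$ factors.

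The genuine gap is in the greedy induction, which is precisely where the paper's proof spends most of its effort. Two of your claims fail as stated. First, the maximizer $u_i$ need \emph{not} lie within $R$ of any $\mu_j$: it can be a bad or uncovered vertex at distance up to $3R$ from some center, see only part of that cluster's good core, and still win the greedy count by padding with bad/uncovered vertices (of which there may be $\Theta(k\eps n)$ overall). Hence ``$\ball(f(u_i),2R)$ captures the good part of exactly one unused cluster'' is false: $C_i$ may miss a portion of the core it touches, leftovers can be collected at later iterations, and your one-new-cluster-per-step induction breaks. Second, a greedy step may touch no new cluster at all (its ball may consist entirely of bad/leftover vertices) when every remaining untouched cluster is small; then several clusters are only absorbed by the residual set $C_k$. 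The repair is not a stronger geometric claim but an accounting/exchange argument, which is what the paper does: if $C_{\rho(i)}$ missed more than $\eps n$ vertices of $A_i'$, centering at any point of $A_i'$ (whose $2R$-ball contains all of $A_i'$ by the triangle inequality) would have been strictly better, so $|A_i'\setminus C_{\rho(i)}|\leq\eps n$; clusters never touched before the last step have at most $2\eps n$ vertices and go to $C_k$ at total cost $O(k\eps n)$; and the ``useless'' clusters $C_j$ that never touch a new $A_i'$ are charged to bad/leftover vertices. Your sketch contains the right exchange ingredient ($|N_i(\cdot)|\geq$ size of a surviving good core) but draws the wrong conclusion from it; converting it into the deficit bound above and summing over clusters is what yields $|{\cal A}\mytriangle{\cal C}|=O(\lambda_k\dmax^3k^4 n/\alpha_{\myin}^2)$.
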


Due to the dependency on $\dmax$, the above result is mainly interesting for bounded degree graphs. We remark that even for some special classes of bounded degree graphs, analyzing the performance of spectral clustering algorithms is already interesting and challenging. For example, in their seminal work Spielman and Teng gave the first rigorous analysis of the performance of spectral clustering methods which use the second eigenvector of the matrix $D-A$ on bounded degree planar graphs and finite element meshes~\cite{ST2007}. This result was further generalized to graphs with bounded degree and bounded genus by Kelner~\cite{DBLP:journals/siamcomp/Kelner06} and excluded-minor graphs by Biswal, Lee and Rao~\cite{DBLP:journals/jacm/BiswalLR10}. Our result holds for arbitrary bounded degree graphs that admit a good quality $k$-partition and demonstrates the effectiveness of the spectral clustering algorithms that use only the first $k$ eigenvectors (cf. Alpert and Yao~\cite{DBLP:conf/dac/AlpertY95}).

We further remark that while $\alpha_\myout$ does not appear explicitly in the error
term in Theorem~\ref{thm:main}, it does implicitly bound the error through a
higher order Cheeger inequality~\cite{DBLP:conf/stoc/LeeGT12}. In particular,
$\lambda_k\leq 2\alpha_{\myout}$ and thus when
${\alpha_{\myout}}/{\alpha^2_{\myin}}$ is small there is strong agreement
between ${\cal A}$ and ${\cal C}$. In addition, the dependency on the upper bound $\dmax$ of maximum vertex degree seems unavoidable since we are measuring
the \emph{size} of the symmetric difference ${\cal A} \mytriangle {\cal C}$
rather than its \emph{volume} or \emph{total degree}. (Note that the definition of conductance is volume-based, while the error is measured with respect to the size of clusters. Such an inconsistency seems to cause the dependency on $\dmax$.) In contrast, the latter
measurement was used in~\cite{DBLP:conf/colt/PengSZ15}, which allows the
authors to derive an error term that is independent of the maximum degree. On the other hand, such a dependency also does not
appear in~\cite{DBLP:conf/innovations/AwasthiCKS16} and~\cite{DBLP:conf/soda/Sinop16}, as the authors are
studying the size-based definition of conductance (i.e.,
$\phi_\myout(S;G):=\frac{|E(S,V(G)\setminus S)|}{|S|}$) instead of the
volume-based definition as in our paper. We also note that the algorithms in some follow-up work (e.g.,~\cite{DBLP:conf/innovations/AwasthiCKS16,DBLP:conf/colt/PengSZ15}) can output some partition with an approximate guarantee
for individual clusters, while our algorithm can only have approximate guarantee over the all $k$ clusters. 

\paragraph*{Application of main theorem} Oveis Gharan and Trevisan
\cite{DBLP:conf/soda/GharanT14} (see also \cite{Tanaka:2011vk}) showed that, if
the \emph{gap} between $\lambda_k$ and $\lambda_{k+1}$ is large enough, then
there exists a partition into $k$ clusters, each having small external
conductance and large internal conductance.

\begin{theorem}[\cite{DBLP:conf/soda/GharanT14}]\label{thm:OT_clustering} There
exist universal constants $c>0$, $\alpha>0$, and $\beta>0$, such that for any
graph $G$ with $\lambda_{k+1} > c k^2 \sqrt{\lambda_k}$, there is a
$(\alpha\cdot\lambda_{k+1}/k, \allowbreak \beta\cdot k^3
\sqrt{\lambda_k})$-strong $k$-partition of $G$.\label{OT-thm}
\end{theorem}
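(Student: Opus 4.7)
The plan is to derive the strong $k$-partition in two stages. First, I would use a higher-order Cheeger-type argument on the first $k$ eigenvectors of $\LL_G$ to produce an initial $k$-partition with small external conductance; second, I would show that the spectral gap between $\lambda_k$ and $\lambda_{k+1}$ automatically forces this partition to have large internal conductance, because otherwise one could exhibit $k+1$ orthonormal low-Rayleigh-quotient test vectors and contradict the variational characterization of $\lambda_{k+1}$.

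\emph{Step 1 (initial partition with small external conductance).} I would invoke the higher-order Cheeger inequality of Lee, Oveis Gharan, and Trevisan~\cite{DBLP:conf/stoc/LeeGT12} (or Louis et al.~\cite{DBLP:conf/stoc/LouisRTV12}) applied to $\LL_G$: this produces $k$ disjoint subsets $T_1,\ldots,T_k\subseteq V(G)$, each with $\phi_{\myout}(T_i;G)=O(k^2\sqrt{\lambda_k})$. I would then distribute the leftover vertices $V(G)\setminus\bigcup_i T_i$ among the $T_i$'s (e.g., greedily, assigning each leftover vertex to the class with which it has the most adjacent edges), paying at most a $\mathrm{poly}(k)$ factor so that the resulting classes $A_1,\ldots,A_k$ satisfy $\phi_{\myout}(A_i;G)=O(k^3\sqrt{\lambda_k})$. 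This establishes the $\beta\cdot k^3\sqrt{\lambda_k}$ external conductance bound.

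\emph{Step 2 (internal conductance from the spectral gap).} Suppose, toward a contradiction, that some $A_i$ has $\phi_{\myin}(A_i)<\gamma$ with $\gamma=\alpha\lambda_{k+1}/k$. Then there exists $S\subset A_i$ with $\vol(S)\leq\vol(A_i)/2$ and $|E(S,A_i\setminus S)|\leq\gamma\vol(S)$. Consider the $k+1$ normalized indicators $\chi_{A_1},\ldots,\chi_{A_{i-1}},\chi_S,\chi_{A_i\setminus S},\chi_{A_{i+1}},\ldots,\chi_{A_k}$, where for any $B\subseteq V(G)$ we set $\chi_B=D^{1/2}\Ind_B/\sqrt{\vol(B)}$; these are mutually orthonormal because the underlying vertex sets are pairwise disjoint. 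A direct computation gives $\chi_B^{\top}\LL_G\chi_B=\phi_{\myout}(B;G)$ for every $B$. Each $\chi_{A_j}$ ($j\neq i$) therefore has Rayleigh quotient $O(k^3\sqrt{\lambda_k})$ by Step 1, and both $\chi_S$ and $\chi_{A_i\setminus S}$ have Rayleigh quotient bounded by $\gamma$ (from the cut inside $A_i$) plus $O(k^3\sqrt{\lambda_k})$ (from the external edges leaving $A_i$). By the Courant--Fischer variational characterization, combined with a Cauchy--Schwarz bookkeeping of the (small) off-diagonal entries $\chi_B^{\top}\LL_G\chi_{B'}=-|E(B,B')|/\sqrt{\vol(B)\vol(B')}$, this would force
\[
\lambda_{k+1} \leq O\!\bigl(\gamma + k^{O(1)}\sqrt{\lambda_k}\bigr).
\]
Plugging in $\gamma=\alpha\lambda_{k+1}/k$ and using the hypothesis $\lambda_{k+1}>ck^2\sqrt{\lambda_k}$ with universal constants $\alpha$ small enough and $c$ large enough then yields a contradiction, so $\phi_{\myin}(A_i)\geq\alpha\lambda_{k+1}/k$ for every $i$.

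The main obstacle is matching the exact exponents of $k$ in Step 2. The Courant--Fischer characterization bounds $\lambda_{k+1}$ by the maximum Rayleigh quotient on the $(k+1)$-dim subspace, not merely by the diagonal of $\LL_G$ in the orthonormal basis: the off-diagonal terms must be carefully bounded so that the cross-term contribution stays within a $\mathrm{poly}(k)$ factor of the diagonal bound. Combined with the $k^3\sqrt{\lambda_k}$ loss coming from redistributing leftover vertices in Step 1, this is where the hypothesis $\lambda_{k+1}>ck^2\sqrt{\lambda_k}$ has to be calibrated against the target bounds $(\alpha\lambda_{k+1}/k,\beta k^3\sqrt{\lambda_k})$; keeping all three universal constants consistent is the main technical difficulty.
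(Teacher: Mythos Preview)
The paper does not prove this theorem; it is quoted from Oveis Gharan and Trevisan~\cite{DBLP:conf/soda/GharanT14} and used as a black box, so there is no in-paper argument to compare against, only the original OT proof.

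Your sketch has a genuine gap in Step~2. You assert that $\chi_S$ has Rayleigh quotient at most $\gamma + O(k^3\sqrt{\lambda_k})$. But
\[
\chi_S^{\top}\LL_G\chi_S \;=\; \phi_{\myout}(S;G) \;=\; \frac{|E(S,A_i\setminus S)| + |E(S,V\setminus A_i)|}{\vol(S)}.
\]
The first summand is indeed at most $\gamma$. The second, however, is only bounded by $|E(A_i,V\setminus A_i)|/\vol(S)\leq O(k^3\sqrt{\lambda_k})\cdot \vol(A_i)/\vol(S)$, and nothing in the definition of $\phi_{\myin}$ prevents the sparse internal cut from being extremely unbalanced, i.e.\ $\vol(S)\ll\vol(A_i)$. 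In that regime the external-edge term blows up and your $(k{+}1)$-dimensional subspace does not witness a small $\lambda_{k+1}$. The companion vector $\chi_{A_i\setminus S}$ is fine (since $\vol(A_i\setminus S)\geq\vol(A_i)/2$), but that leaves you with only $k$ good test vectors, not $k{+}1$. This problem sits on the diagonal of your $(k{+}1)\times(k{+}1)$ Gram matrix, so the off-diagonal Cauchy--Schwarz bookkeeping you mention cannot rescue it.

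Step~1 is also underspecified. The higher-order Cheeger inequalities of~\cite{DBLP:conf/stoc/LeeGT12,DBLP:conf/stoc/LouisRTV12} produce $k$ disjoint low-conductance sets with no lower bound on their total volume; greedily distributing the leftover vertices can destroy the conductance bound, and there is no argument that the loss is only $\mathrm{poly}(k)$. The actual proof in~\cite{DBLP:conf/soda/GharanT14} does not follow a two-stage ``Cheeger then contradiction'' outline: it runs an iterative refinement that alternately merges clusters with too many crossing edges and splits clusters with low internal conductance, using the gap $\lambda_{k+1}\gg k^2\sqrt{\lambda_k}$ to show the process terminates with exactly $k$ parts. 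That iterative mechanism is precisely what handles the unbalanced-cut obstruction above, and it is absent from your plan.
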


The same paper \cite{DBLP:conf/soda/GharanT14} also shows how to compute a
partition with slightly worse quantitative guarantees, using an iterative
combinatorial algorithm with polynomial running time. More specifically, they have shown the following theorem.
\begin{theorem}[\cite{DBLP:conf/soda/GharanT14}]\label{thm:OT_clustering_algorithm}
There is a polynomial time algorithm that takes as input a
graph $G$ with $\lambda_{k+1}>0$ for any $k\geq 1$, outputs an
$\ell$-partition that is $(\Omega(\lambda_{k+1}^2/k^4),
O(k^6\sqrt{\lambda_{k}}))$-strong, for some $1\leq\ell <k+1$.
\end{theorem}

Let $\tau$ be a number satisfying $\tau\geq \tau_0:=\max\{c, 10\cdot\alpha^{-1}\cdot
\sqrt{k}\dmax^{3/2}\}$, where $\alpha, c$ are the constants given
in Theorem \ref{thm:OT_clustering}. Let $G$ be a graph with $\lambda_{k+1}>\tau k^2\sqrt{\lambda_k}$. By applying Theorem~\ref{thm:main} on $G$ with parameters $\alpha_{\myin}=\alpha\cdot \lambda_{k+1}/k$, which satisfies that $\alpha_{\myin}> 10(k\dmax)^{3/2}\cdot \sqrt{\lambda_k}$, we obtain the following corollary. 
\begin{corollary}\label{cor:main}
Let $k\geq 1$. Let $G$ be an $n$-vertex graph with maximum degree at most $\dmax$, and $ \lambda_{k+1} > \tau k^2
\sqrt{\lambda_{k}} $, where $\tau\geq \tau_0$ and $\tau_0$ is defined as above. Let ${\cal A}$ be the $(\alpha\cdot \lambda_{k+1}/k,\beta\cdot k^3\sqrt{\lambda_k})$-strong partition of $G$
guaranteed by Theorem \ref{thm:OT_clustering}. Then, on input $G$, the algorithm
in Figure \ref{fig:algorithm} outputs a partition ${\cal C}$ such that
\[
  |{\cal A} \mytriangle {\cal C}| = O\left(\frac{\dmax^3 k^3 n}{\tau^2}\right).
\]
\end{corollary}
In comparison with the algorithm from Theorem~\ref{thm:OT_clustering_algorithm} that finds a partition that is a $(\Omega(\lambda_{k+1}^2/k^4),
O(k^6\sqrt{\lambda_{k}}))$-strong partition, our algorithm finds a partition that is \emph{close to} some $(\Omega(\lambda_k/k),O(k^3\sqrt{\lambda_k}))$-strong partition. It is not clear how to find in polynomial time a partition (without error) that is $(\Omega(\lambda_k/k),O(k^3\sqrt{\lambda_k}))$-strong.  

\section{Spectral Concentration}\label{sec:concentration} In this section, we
prove that for any graph with some strong $k$-partition and any eigenvector $\vv_i$ ($1\leq i\leq k$), the rescaled vector
$\xx_i:=\vv_i\mat{D}^{-1/2}$ is close (with respect to the $\ell_2$ norm) to
some vector $\widetilde{\xx}_i$ that is constant on each cluster. We slightly
abuse the notation by also using $\f$ to denote the $n\times k$ matrix that
corresponds to our spectral embedding (i.e., with row vectors $\f(u)$, for all
$u\in V$). It is useful to note that $\f = [\xx_1^T,\cdots,\xx_k^T]$.

\begin{lemma}\label{lem:uniformity}
Let $G$ be a graph with maximum degree at most $\dmax$.
Let $\vv_1,\ldots,\vv_k \in \mathbb{R}^k$ denote the first $k$ eigenvectors of $\LL_G$.
For $\alpha_{\myin}>0$, let ${\cal
A}=\{A_1,\ldots,A_k\}$ be any $(\alpha_{\myin}, \alpha_{\myout})$-strong
$k$-partition of $V(G)$.
For any $i \in \{1,\ldots,k\}$, if
$\xx_i=\vv_i\mat{D}^{-1/2}$, then there exists $\widetilde{\xx}_i \in
\mathbb{R}^n$, such that,
\begin{description}
  \item{(i)}  $\|\xx_i-\widetilde{\xx}_i\|_2^2 \leq \frac{2 k \lambda_k \cdot \dmax}{\alpha_{\myin}^2}$, and
  \item{(ii)} $\widetilde{\xx}_i$ is constant on the clusters of ${\cal A}$, i.e.~for any $A \in {\cal A}$, $u,v\in A$, we have $\widetilde{\xx}_i(u)=\widetilde{\xx}_i(v)$.
\end{description}
\end{lemma}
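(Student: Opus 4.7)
The plan is to exploit two ingredients. First, expanding the eigenvalue relation $\vv_i \LL_G \vv_i^T = \lambda_i$ yields the Dirichlet-energy identity $\sum_{(u,v) \in E(G)} (\xx_i(u) - \xx_i(v))^2 = \lambda_i \leq \lambda_k$, so $\xx_i = \vv_i D^{-1/2}$ has small total variation across edges of $G$. Second, the hypothesis $\phi_{\myin}(A_j) \geq \alpha_{\myin}$ forces each induced subgraph $G[A_j]$ to have a large spectral gap via Cheeger's inequality, so a Poincar\'e-type inequality will force any low-energy function on $G[A_j]$ to be close (in a degree-weighted $\ell_2$ sense) to a constant on $A_j$. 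Combining the two, $\xx_i$ will be close to a function that is constant on each cluster.

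I would proceed in four steps. (a) Derive the Dirichlet-energy identity by expanding $\vv_i(I - D^{-1/2}AD^{-1/2})\vv_i^T$. (b) Show that the ordinary Cheeger conductance of $G[A_j]$ is at least $\alpha_{\myin}$. The subtle point is that $\phi_{\myin}(A_j)$ constrains $S' \subseteq A_j$ by the ambient volume $\vol_G(S') \leq \vol_G(A_j)/2$, whereas the standard conductance of $G[A_j]$ uses the internal volume $\vol_{G[A_j]}(S') \leq \vol_{G[A_j]}(A_j)/2$; a short case analysis in which $S'$ is swapped with $A_j \setminus S'$ whenever the ambient constraint fails reconciles the two definitions. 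Cheeger's inequality then gives $\lambda_2(\LL_{G[A_j]}) \geq \alpha_{\myin}^2/2$. (c) For each $j$, let $c_j := \vol_{G[A_j]}(A_j)^{-1} \sum_{u \in A_j} \deg_{G[A_j]}(u)\xx_i(u)$ and set $\widetilde{\xx}_i(u) := c_j$ for $u \in A_j$, which is constant on clusters as required. The Poincar\'e inequality for $\LL_{G[A_j]}$ then yields
\[
\sum_{u \in A_j} \deg_{G[A_j]}(u)(\xx_i(u) - c_j)^2 \leq \frac{2}{\alpha_{\myin}^2}\sum_{(u,v) \in E(G[A_j])}(\xx_i(u) - \xx_i(v))^2.
\]
(d) Convert the degree-weighted left-hand side into the unweighted $\sum_{u \in A_j}(\xx_i(u) - c_j)^2$, using that $\alpha_{\myin}>0$ essentially rules out vertices isolated in $G[A_j]$ (an isolated $u$ would witness $\phi_{\myin}(A_j)=0$, either through $\{u\}$ or through $A_j\setminus\{u\}$), so $\deg_{G[A_j]}(u)\geq 1$ for every $u$. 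Summing over the $k$ clusters and bounding $\sum_j \sum_{(u,v) \in E(G[A_j])}(\xx_i(u)-\xx_i(v))^2 \leq \lambda_k$ gives a bound of order $\lambda_k/\alpha_{\myin}^2$ per eigenvector, well within the stated $2k\lambda_k \dmax^3/\alpha_{\myin}^2$; the factors $k$ and $\dmax^3$ absorb slack from the volume-constraint reconciliation and the weighted-to-unweighted conversion.

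The hard part is the reconciliation between the paper's custom definition of $\phi_{\myin}$ and the standard conductance of $G[A_j]$ to which Cheeger's inequality applies, since the mismatch in which volume gates the minimization could in principle yield only a weaker Cheeger lower bound on $\lambda_2(\LL_{G[A_j]})$; a second subtlety is the careful treatment of vertices that are (nearly) isolated in $G[A_j]$, whose Dirichlet-energy contribution vanishes and must be ruled out via $\alpha_{\myin}>0$ or else absorbed into the loose $\dmax^3$ factor. Once those two technical points are handled, the argument is essentially a textbook Cheeger/Poincar\'e calculation applied cluster by cluster.
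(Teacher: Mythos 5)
Your argument is correct, and it shares the paper's skeleton (Dirichlet energy of the first $k$ eigenvectors is at most $\lambda_k$, plus Cheeger applied to each induced subgraph $G[A_j]$), but the execution differs from the paper's in a worthwhile way. The paper routes the estimate through an intermediate pairwise bound (Lemma~\ref{lem:clusterable-eigenvector}, proved via the identity (1.5) of Chung's book, with degree--degree weights and a $\vol(A_j)$ factor), then converts $\sum_{u,v\in A_j}(\xx_i(u)-\xx_i(v))^2$ to a distance-to-the-(unweighted)-mean via the variance identity; this costs $\vol(A_j)\leq \dmax|A_j|$ per cluster and charges $\lambda_k$ once per cluster, producing the $k\dmax$ (stated as $k\dmax^3$) factor. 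You instead take the degree-weighted mean on each cluster and apply the standard Rayleigh-quotient/Poincar\'e inequality for $\lambda_2(\LL_{G[A_j]})\geq \alpha_{\myin}^2/2$ directly, then drop the weights using $\deg_{G[A_j]}(u)\geq 1$; summing the cluster-internal Dirichlet energies, which are disjoint pieces of the global energy, charges $\lambda_k$ only once, so you get a bound of order $\lambda_k/\alpha_{\myin}^2$ with no $k$ or $\dmax$ factor --- strictly sharper than both the paper's stated bound and what its proof actually delivers ($2k\lambda_k\dmax/\alpha_{\myin}^2$). Two further points in your favor: your case analysis reconciling $\phi_{\myin}(A_j)\geq\alpha_{\myin}$ with the standard conductance of $G[A_j]$ (swapping $S'$ with $A_j\setminus S'$ when the volume gate fails) is sound and fills a step the paper silently assumes when it plugs $C=A_j$ into Lemma~\ref{lem:clusterable-eigenvector}; and your observation that $\alpha_{\myin}>0$ rules out vertices isolated in $G[A_j]$ matches the paper's use of $\deg_H(u)\geq 1$ in the appendix. (Minor nit: the single-vertex-cluster case, where $\lambda_2$ of the induced subgraph is undefined, should be noted as trivial since the cluster contributes nothing to the error.)
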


Before laying out the proof, we provide some explanation of the statement of the
theorem. First, note that the $\ell_2^2$-distance between $\xx_i$ and its
uniform approximation $\widetilde{\xx}_i$ depends linearly on the ratio
$\lambda_k/\alpha_{\myin}^2$, which, as noted above, is bounded from above by
$2\alpha_{\myout}/\alpha_{\myin}^2$. Second, the partition-wise uniform vector
$\widetilde{\xx}_i$ which minimizes the left hand side of (i) is constructed by
taking the mean values of $\xx_i$ on each partition. This, together with the
bound in (i), means that $\xx_i$ assumes values in each partition close to their
mean. In summary, if there is a sufficiently large gap between the external
conductance $\alpha_{\myout}$ and internal conductance $\alpha_{\myin}$ of the
clusters, the values taken by each vector $\xx_i$ have $k$ prominent modes over
$k$ partitions.

We need the following result that is a slight restatement of a lemma in
\cite{DBLP:conf/stoc/CzumajPS15} to prove Lemma \ref{lem:uniformity}. (For completeness, a
proof of Lemma~\ref{lem:clusterable-eigenvector} is included in~\ref{appendix:eigenvector}.)

\begin{lemma}
\label{lem:clusterable-eigenvector}
Let $G = (V,E)$ be any undirected graph and let $C \subseteq V$ be any subset with
$\phi(G[C])\geq\phi_{\myin}>0$. Then for every $i$, $1 \le i \le k$,
$\xx_i=\vv_i\mat{D}^{-1/2}$, the following holds:
\[
\sum_{u, v \in C} (\xx_i(u) - \xx_i(v))^2
  \leq \frac{4\lambda_k \cdot\vol(C)}{\phi_{\myin}^2}.
\]
\end{lemma}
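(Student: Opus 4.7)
The plan is to combine the Rayleigh-quotient bound on $\xx_i = \vv_i D^{-1/2}$ with Cheeger's inequality applied to the induced subgraph $G[C]$. First, since $\vv_i$ is a unit eigenvector of $\LL_G$ with eigenvalue $\lambda_i \le \lambda_k$, the standard identity
\[
\vv_i \LL_G \vv_i^T \;=\; \sum_{\{u,v\}\in E(G)} (\xx_i(u)-\xx_i(v))^2 \;=\; \lambda_i \;\le\; \lambda_k
\]
gives, after discarding the edges not inside $G[C]$, the bound $\sum_{\{u,v\}\in E(G[C])} (\xx_i(u)-\xx_i(v))^2 \le \lambda_k$.

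Next I would invoke the hypothesis $\phi(G[C]) \ge \phi_{\myin}$: by Cheeger's inequality for the normalized Laplacian, $\lambda_2(\LL_{G[C]}) \ge \phi_{\myin}^2/2$. Let $\bar{x}$ be the $\deg_{G[C]}$-weighted mean of $\xx_i$ on $C$ and set $g(u)=\xx_i(u)-\bar{x}$. Since subtracting a constant does not affect the Dirichlet energy, while $g \cdot D_{G[C]}^{1/2}$ is now orthogonal to the Perron eigenvector $D_{G[C]}^{1/2}\mathbf{1}$, the variational characterization of $\lambda_2(\LL_{G[C]})$ gives
\[
\sum_{u\in C} g(u)^2 \deg_{G[C]}(u) \;\le\; \frac{1}{\lambda_2(\LL_{G[C]})}\sum_{\{u,v\}\in E(G[C])}(g(u)-g(v))^2 \;\le\; \frac{2\lambda_k}{\phi_{\myin}^2}.
\]

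Finally I would convert this weighted $\ell_2$ bound into the required pairwise sum. Expanding algebraically,
\[
\sum_{u,v\in C}(\xx_i(u)-\xx_i(v))^2 \;=\; \sum_{u,v\in C}(g(u)-g(v))^2 \;=\; 2|C|\sum_{u\in C}g(u)^2 - 2\Bigl(\sum_{u\in C}g(u)\Bigr)^2 \;\le\; 2|C|\sum_{u\in C}g(u)^2.
\]
Since $\phi(G[C])>0$ forces $G[C]$ to be connected, every $u\in C$ satisfies $\deg_{G[C]}(u)\ge 1$, so $\sum_u g(u)^2 \le \sum_u g(u)^2 \deg_{G[C]}(u) \le 2\lambda_k/\phi_{\myin}^2$ and $|C|\le \vol(C)$, yielding the claimed $4\lambda_k\vol(C)/\phi_{\myin}^2$.

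The step I expect to need the most care is the last conversion: weighted Cheeger-type inequalities naturally control $\deg$-weighted norms, and a naive comparison could introduce a spurious factor of $\dmax$. The key observation that keeps the final bound linear in $\vol(C)$ is that it suffices to dominate $g(u)^2$ pointwise by $g(u)^2\deg_{G[C]}(u)$, which is valid because positive internal conductance precludes isolated vertices in $G[C]$.
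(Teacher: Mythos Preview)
Your proof is correct and follows essentially the same route as the paper: bound the Dirichlet energy of $\xx_i$ by $\lambda_k$ via the Rayleigh quotient, apply Cheeger on $G[C]$ to get $\lambda_2(\LL_{G[C]})\ge\phi_{\myin}^2/2$, and then use $\deg_{G[C]}(u)\ge1$ to pass from degree-weighted to unweighted sums. The only cosmetic difference is that the paper quotes Chung's variational identity for $\lambda_2$ (which already has the degree-weighted pairwise sum in the denominator) as a black box, whereas you redo that step by hand via the centering $g=\xx_i-\bar x$ and the Rayleigh characterization of $\lambda_2$; the two are equivalent, and your final bookkeeping ($|C|\le\vol(C)$ versus the paper's $\vol_{G[C]}(C)\le\vol(C)$) lands on the same constant.
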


We remark that in the above Lemma, there is a linear dependency on $\vol(C)$, which directly causes our result and the following analysis to depend on $\dmax$. Now we are ready to prove Lemma~\ref{lem:uniformity}.
\begin{proof}[Proof of Lemma~\ref{lem:uniformity}]
Let $1\leq i\leq k$, and $1\leq j\leq k$. By precondition of the lemma, ${\cal A}$ is an $(\alpha_{\myin},\alpha_{\myout})$-strong
$k$-partition. Now we apply $C=A_j$ and $\phi_{\myin}=\alpha_{\myin}$ in
Lemma~\ref{lem:clusterable-eigenvector} to get
\begin{displaymath}
    \sum_{u, v \in A_j} (\xx_i(u) - \xx_i(v))^2
	\le
       \frac{4\lambda_k \cdot\vol(A_j)}{\alpha_{\myin}^2}
       \leq
       \frac{4\lambda_k \cdot \dmax\cdot|A_j|}{\alpha_{\myin}^2},
\end{displaymath}
where the second inequality follows from our assumption that the maximum degree is $\dmax$.
Let $\widetilde{\xx}_i(u)=\frac{\sum_{v\in A_j}\xx(v)}{|A_j|}$ if $u\in A_j$. Note that $\widetilde{\xx}_i$ is constant on each cluster. On the other hand, by the definition of $\widetilde{\xx}_i$, we have
\begin{displaymath}
\frac{1}{|A_j|}\sum_{u, v \in A_j} (\xx_i(u) - \xx_i(v))^2
=2\sum_{u\in A_j}(\xx_i(u) - \widetilde{\xx}_i(u))^2.
\end{displaymath}

Therefore,
\begin{eqnarray*}
\norm{\xx_i - \widetilde{\xx}_i}_2^2 &=& \sum_{j=1}^k\sum_{u\in A_j}(\xx_i(u) - \widetilde{\xx}_i(u))^2
     =      \frac12\sum_{j=1}^k\frac{1}{|A_j|}\sum_{u, v \in A_j} (\xx_i(u) - \xx_i(v))^2 \\
     &\leq& \frac{2 k \lambda_k \cdot \dmax}{\alpha_{\myin}^2}.
\end{eqnarray*}
This completes the proof of the lemma.
\end{proof}

\section{From Spectral Concentration to Spectral Clustering}\label{sec:clustering}

In this section we prove Theorem \ref{thm:main}. We begin by showing that if
there exists strong $k$-partition with high quality in the graph $G$, then in
the spectral embedding defined by $\f(u) =\deg(u)^{-1/2}
(\vv_1(u),\ldots,\vv_k(u))$ for any $u\in V$, one can find $k$ well-separated
center points in $\mathbb{R}^k$ such that the balls (of some appropriately
chosen radius) centered at these center points are disjoint and the collection
of these balls is close to any strong $k$-partition.

\begin{lemma}\label{lem:sep}
Let $G$ be an $n$-vertex graph of maximum degree at most $\dmax$.
Let ${\cal A}=\{A_1,\ldots,A_k\}$ be any $(\alpha_{\myin},
\alpha_{\myout})$-strong $k$-partition of $V(G)$ with $\alpha_{\myin}> 10
(k\dmax)^{3/2}\cdot \sqrt{\lambda_k}$. Let $\f:V(G)\to \mathbb{R}^k$ be
the spectral embedding of $G$ given by (\ref{eq:embedding}). Let
$R=\frac{1}{36k\dmax\sqrt{n}}$. Then there exists $k$ points
$\pp_1,\ldots,\pp_k\in \mathbb{R}^k$ and a family, ${\cal A}'$, of $k$ subsets of $V(G)$, given by $A_i' = \{u\in V: \norm{\f(u)-\pp_i}_2\leq R\}$ for $1 \leq i \leq k$, such that the following conditions are satisfied:
\begin{description}
\item{(i)} For $1 \leq i < j \leq k$, $\|\pp_i - \pp_j\|_2 > 6 R$.
\item{(ii)} The elements of ${\cal A}'$ are pairwise disjoint.
\item{(iii)} $|{\cal A} \mytriangle {\cal A}' | = O\left(\frac{\lambda_k\cdot \dmax^3k^4\cdot n}{\alpha_{\myin}^2}\right)$.
\end{description}
\end{lemma}

To prove Lemma~\ref{lem:sep}, we first give some definitions and introduce some useful tools. 
For any symmetric matrix $\mat{X}$, let $\eta_i(\mat{X})$ denote the $i$th
largest eigenvalue of $\mat{X}$. For any (not necessarily square) matrix
$\mat{Y}$, let $\mat{Y}_{\row(i)}$ denote the $i$th row vector of $\mat{Y}$.
We will make use of the following pair of facts which are proved in~\ref{sec:facts}:

\begin{fact}\label{fact:eigenvalue-A-B}
For any two $p\times p$ symmetric matrices $\mat{X},\mat{Y}$, if $\max_{i\leq
p}\norm{\mat{X}_{\row(i)}-\mat{Y}_{\row(i)}}_2\leq \delta$,  then for any $i\leq
p$, $|\eta_i(\mat{X})-\eta_i(\mat{Y})|\leq \sqrt{p}\cdot\delta$.
\end{fact}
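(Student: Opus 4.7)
The plan is to derive the bound from two classical matrix-perturbation facts. First I would invoke Weyl's inequality for symmetric matrices, which states that $|\eta_i(X) - \eta_i(Y)| \leq \norm{X - Y}_2$ for every $i \leq p$, where $\norm{\cdot}_2$ denotes the spectral (operator) norm. This immediately reduces the problem to bounding $\norm{X - Y}_2$ in terms of the row-wise discrepancy $\delta$.

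The second step is the standard inequality $\norm{M}_2 \leq \norm{M}_F$, where $\norm{M}_F$ is the Frobenius norm. Writing the Frobenius norm of $M = X - Y$ in terms of its row norms,
\[
\norm{X-Y}_F^2 \;=\; \sum_{i=1}^p \norm{(X-Y)_{\row(i)}}_2^2 \;\leq\; p\,\delta^2,
\]
by the hypothesis that every row of $X - Y$ has $\ell_2$ norm at most $\delta$. Taking square roots gives $\norm{X-Y}_F \leq \sqrt{p}\,\delta$, and chaining the three inequalities yields $|\eta_i(X) - \eta_i(Y)| \leq \sqrt{p}\,\delta$, as claimed.

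Both ingredients are textbook, so there is no serious obstacle in this proof. Weyl's inequality follows in one line from the Courant-Fischer min-max characterization
\[
\eta_i(X) \;=\; \max_{\substack{V \subseteq \mathbb{R}^p \\ \dim V = i}} \; \min_{\substack{v \in V \\ \norm{v}_2 = 1}} v X v^{\top}
\]
(recall vectors are row vectors throughout the paper): writing $v X v^{\top} = v Y v^{\top} + v(X-Y) v^{\top}$ and bounding $|v(X-Y)v^{\top}| \leq \norm{X-Y}_2$ for unit $v$ gives $\eta_i(X) \leq \eta_i(Y) + \norm{X-Y}_2$, with the reverse direction following by symmetry. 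The factor $\sqrt{p}$ is simply the worst-case conversion between row-wise $\ell_2$ bounds and the spectral norm, and no improvement is possible in general (e.g.\ when $X - Y$ is a rank-one matrix with all rows equal to $(\delta/\sqrt{p}) \cdot \mathbf{1}$).
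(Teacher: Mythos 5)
Your proof is correct and follows essentially the same route as the paper: bound $\norm{X-Y}_F \leq \sqrt{p}\,\delta$ from the row-wise hypothesis, use $\norm{X-Y}_2 \leq \norm{X-Y}_F$, and conclude via Weyl's inequality. The extra details you supply (the Courant--Fischer derivation of Weyl and the tightness example) go beyond what the paper records but do not change the argument.
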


\begin{fact}\label{fact:rowdifference}
For any two $p\times q$ matrices $\mat{X},\mat{Y}$, if $\max_{i\leq
p}\norm{\mat{X}_{\row(i)}}_2\leq \gamma$, and $\max_{i\leq
p}\norm{\mat{X}_{\row(i)}-\mat{Y}_{\row(i)}}_2 \leq \delta$, then $\max_{i\leq
p}\norm{(\mat{X}\cdot \mat{X}^T)_{\row(i)}-(\mat{Y}\cdot \mat{Y}^T)_{\row(i)}}_2
\leq \sqrt{p}(\delta^2 + 2\gamma \delta)$.
\end{fact}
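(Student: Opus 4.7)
The plan is to write $Y = X + E$ with $E := Y - X$, so that by hypothesis every row of $E$ has $\ell_2$ norm at most $\delta$ and every row of $X$ has $\ell_2$ norm at most $\gamma$. Expanding the Gram-type product bilinearly gives
\[
YY^T - XX^T = XE^T + EX^T + EE^T,
\]
which isolates the perturbation into three easy-to-bound terms.

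Next I would control the matrix $YY^T - XX^T$ entry-wise. The $(i,j)$ entry of the right-hand side is $\langle X_{\row(i)}, E_{\row(j)}\rangle + \langle E_{\row(i)}, X_{\row(j)}\rangle + \langle E_{\row(i)}, E_{\row(j)}\rangle$, and Cauchy--Schwarz together with the row-norm hypotheses bounds these three inner products by $\gamma\delta$, $\gamma\delta$, and $\delta^2$ respectively. Hence every entry of $YY^T - XX^T$ has absolute value at most $2\gamma\delta + \delta^2$.

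Finally, since $YY^T - XX^T$ is a $p\times p$ matrix, each of its rows is a vector in $\mathbb{R}^p$ all of whose entries are bounded in magnitude by $\delta^2 + 2\gamma\delta$, so its $\ell_2$ norm is at most $\sqrt{p}(\delta^2 + 2\gamma\delta)$, as claimed. I do not expect any genuine obstacle: the argument is a routine perturbation expansion followed by Cauchy--Schwarz, and the only thing to keep straight is which hypothesis (the $\gamma$-bound on the rows of $X$ versus the $\delta$-bound on the rows of $E$) controls each of the three cross terms.
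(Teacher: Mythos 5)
Your proposal is correct and follows essentially the same route as the paper: the paper likewise bounds each entry $\langle X_i,X_j\rangle-\langle Y_i,Y_j\rangle$ by substituting $Y_i=X_i+(Y_i-X_i)$, applying Cauchy--Schwarz to the three resulting terms to get $\delta^2+2\gamma\delta$, and then converting the entrywise bound into the row-wise $\ell_2$ bound with the factor $\sqrt{p}$. Your matrix-level decomposition $YY^T-XX^T=XE^T+EX^T+EE^T$ is just a cleaner packaging of the same expansion.
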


Now we prove Lemma~\ref{lem:sep}.
\begin{proof}[Proof of Lemma~\ref{lem:sep}]
Recall that for any $i$, $1 \le i \leq k$, $\xx_i = \vv_i\mat{D}^{-1/2}$ and
$\widetilde{\xx}_i$ denotes the vector that $\widetilde{\xx}_i(u) =
\frac{1}{|A_j|} \sum_{v \in A_j} \xx_i(v)$ if $u \in A_j$. Now for each $1\leq
j\leq k$, define
\[
  \pp_j:=(\widetilde{\xx}_1(u),\cdots,\widetilde{\xx}_k(u)), \quad \textrm{for any $u\in A_j$.}
\]
Further recall that $\f\in \mathbb{R}^{n\times k}$ is the matrix corresponding
to our spectral embedding and that $\f = [\xx_1^T,\cdots,\xx_k^T]$. Let
$\mat{P}:=[\widetilde{\xx}_1^T,\cdots,\widetilde{\xx}_k^T]$.
Note that for any $u\in A_j$, the row vector corresponding to $u$ of $\mat{P}$
is $\pp_j$. Let $\zeta=\frac{2 k \lambda_k \cdot
\dmax}{\alpha_{\myin}^2}$.
We have the following claim about the eigenvalues of matrix $\mat{P}^T\cdot \mat{P}$.

\begin{claim}\label{lem:eigen_lower}
All eigenvalues of $\mat{P}^T\cdot \mat{P}$ are at least
$\frac{1}{\dmax}-k(\zeta+2\sqrt{\zeta})$.
\end{claim}

\begin{proof}
Let $\mat{T}$ be the $n\times k$ matrix with $i$th column $\vv_i^T$, for each
$i\leq k$. Thus, $\mat{F}=\mat{D}^{-1/2}\mat{T}$. Now we note that all the
eigenvalues of $\f^T\cdot \f$ are at least $1/\dmax$. This is true since for any
$\mat{y}\in\mathbb{R}^k$,
\begin{eqnarray*}
\mat{y}(\f^T\cdot \f)\mat{y}^T
    = \norm{\mat{y}\f^T}_2^2
    = \norm{\mat{y} \mat{T}^T \mat{D}^{-1/2}}_2^2
 \geq \norm{\mat{y} \mat{T}^T}_2^2/\dmax
    = \mat{y}\cdot \mat{y}^T / \dmax,
\end{eqnarray*}
where the second to last inequality follows from the fact that
$\norm{\mat{z}\mat{M}}_2^2\geq (\min_{i}\mat{M}_{i,i})^2\norm{\mat{z}}_2^2$ for
any vector $\mat{z}\in \mathbb{R}^n$ and diagonal matrix $\mat{M}\in
\mathbb{R}^{n\times n}$, which is true since
$\norm{\mat{z}\mat{M}}_2^2
    = \sum_{i}(\mat{z}_i\mat{M}_{i,i})^2
 \geq (\min_{i}\mat{M}_{i,i})^2\sum_{i}(\mat{z}_i)^2
    = (\min_{i}\mat{M}_{i,i})^2\norm{\mat{z}}_2^2$%
; and the last equation follows from the observation that $\mat{T}^T\cdot \mat{T}=\mat{I}_{k\times k}$.

Now note that for each $i\leq k$,
since $\xx_i = \vv_i\mat{D}^{-1/2}$, and $\norm{\vv_i}_2=1$, it holds that
$\norm{\xx_i}_2\leq 1$. On the other hand, by
Lemma~\ref{lem:uniformity}, it holds that
\begin{eqnarray*}
\norm{\mat{F}^T_{\row(i)}-\mat{P}^T_{\row(i)}}_2
    = \norm{\xx_i-\widetilde{\xx}_i}_2 \leq \sqrt{\zeta}.
\end{eqnarray*}

By Fact~\ref{fact:rowdifference}, we have that
$\max_{i\leq p}\norm{(\mat{F}^T\cdot \mat{F})_{\row(i)}-(\mat{P}^T\cdot \mat{P})_{\row(i)}}_2 \leq \sqrt{k}(\zeta+2\sqrt{\zeta})$.
Then by Fact~\ref{fact:eigenvalue-A-B}, for each $i\leq k$,
$|\eta_i(\mat{F}^T\cdot \mat{F})-\eta_i(\mat{P}^T\cdot\mat{P})|\leq
k(\zeta+2\sqrt{\zeta})$. Since all the eigenvalues of
$\mat{F}^T\cdot \mat{F}$ are at least $\frac{1}{\dmax}$, it follows that that
all eigenvalues of $\mat{P}^T\cdot \mat{P}$ are at least
$\frac{1}{\dmax}-k(\zeta+2\sqrt{\zeta})$.
\end{proof}

On the other hand, we prove in the following claim that if there exists two vectors
$\pp_{i_0},\pp_{j_0}$ that are close, then $\mat{P}^T\cdot \mat{P}$ has at least
one small eigenvalue.

\begin{claim}\label{lem:eigen_upper}
If there exists $i_0,j_0\leq k$ such that $\norm{\pp_{i_0}-\pp_{j_0}}_2\leq 6R$,
then $\mat{P}^T\cdot \mat{P}$ has an eigenvalue at most
$k(36R^2n + 12R\sqrt{n})$.
\end{claim}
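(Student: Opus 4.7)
The plan is to exhibit a rank-deficient matrix close to $\XX$, so that $\XX \XX^T$ must itself be close to a matrix with a zero eigenvalue and therefore admit a small eigenvalue of its own. Concretely, I would form the $k\times n$ matrix $\XX'$ from $\XX$ by replacing, for every $u\in A_{j_0}$, the column $\pp_{j_0}^T$ of $\XX$ with $\pp_{i_0}^T$; all remaining columns of $\XX$ are left unchanged. Every column of $\XX'$ then lies in $\{\pp_j^T : j\neq j_0\}$, so $\rankM(\XX')\le k-1$, which implies that the $k\times k$ PSD matrix $\XX'\XX'^T$ has $0$ as its smallest eigenvalue. The task is then reduced to quantifying the distance between $\XX\XX^T$ and $\XX'\XX'^T$ and invoking a perturbation argument.

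For this I would first bound the row-wise distance between $\XX$ and $\XX'$: for every $i\le k$,
\[
\|\XX_{\row(i)} - \XX'_{\row(i)}\|_2^2
= |A_{j_0}|\,(\pp_{j_0}(i)-\pp_{i_0}(i))^2
\le n\,\|\pp_{j_0}-\pp_{i_0}\|_2^2
\le 36 R^2 n,
\]
so $\delta := \max_i \|\XX_{\row(i)} - \XX'_{\row(i)}\|_2 \le 6R\sqrt{n}$. Next, I would argue that $\gamma := 1$ is a valid row-norm bound for $\XX$: since $\widetilde{\xx}_i$ is the cluster-wise average of $\xx_i$, Cauchy--Schwarz (equivalently, Jensen's inequality) gives $\|\widetilde{\xx}_i\|_2 \le \|\xx_i\|_2 = \|\vv_i D^{-1/2}\|_2 \le \|\vv_i\|_2 = 1$, using that $D^{-1/2}$ is a contraction when every vertex has degree at least $1$.

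Plugging $(\gamma,\delta)=(1,\,6R\sqrt{n})$ into Fact~\ref{fact:rowdifference} with $p=k$ and $q=n$ yields
\[
\max_{i\le k} \|(\XX\XX^T)_{\row(i)} - (\XX'\XX'^T)_{\row(i)}\|_2
\le \sqrt{k}\,(\delta^2 + 2\gamma\delta)
= \sqrt{k}\,(36R^2 n + 12R\sqrt{n}).
\]
Since the smallest eigenvalue of $\XX'\XX'^T$ is $0$, a Weyl-style perturbation step---either by appealing to Fact~\ref{fact:eigenvalue-A-B}, or by directly bounding $\|\XX\XX^T - \XX'\XX'^T\|_2$ through the identity $AA^T - BB^T = (A-B)A^T + B(A-B)^T$ combined with $\|\XX\|_2,\|\XX'\|_2 = O(\sqrt{k})$---translates the row-wise closeness into the stated eigenvalue bound on $\XX\XX^T$. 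The main subtlety I anticipate is pinning down the row-norm bound $\gamma=1$: it is what keeps the perturbation from absorbing extra factors of $k$, and it depends crucially on the averaging inequality $\|\widetilde{\xx}_i\|_2 \le \|\xx_i\|_2$ together with the normalization $\|\vv_i\|_2=1$.
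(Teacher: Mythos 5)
Your proposal is correct and follows essentially the same route as the paper: you merge the two nearby centers to obtain a rank-deficient matrix (you replace $\pp_{j_0}$-columns by $\pp_{i_0}$, the paper does the symmetric replacement), bound the row-wise perturbation by $6R\sqrt{n}$ and the row norms of $\XX$ by $1$ via the same averaging/Jensen argument, and then combine Fact~\ref{fact:rowdifference} with the Weyl-type Fact~\ref{fact:eigenvalue-A-B} applied to the singular matrix. Even the small bookkeeping subtlety (the extra $\sqrt{k}$ one would pick up when passing from the row-wise bound to the eigenvalue bound via Fact~\ref{fact:eigenvalue-A-B}) is shared with the paper's own write-up, so there is no new gap in your argument.
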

\begin{proof}
Let $\mat{Q}$ denote the $n\times k$ matrix obtained from $\mat{P}$ by
replacing each row vector that equals $\pp_{i_0}$ by vector $\pp_{j_0}$. Note that
$\mat{Q}^T\cdot\mat{Q}$ is singular, and thus has eigenvalue $0$.

Now note that
$\max_{i\leq k}\norm{\mat{P}^T_{\row(i)}-\mat{Q}^T_{\row(i)}}_2\leq 6R\sqrt{n}$ since the
absolute value of each entry in $\mat{P}-\mat{Q}$ is at most $6R$, and also note that
for any $i\leq k$,
\begin{eqnarray*}
\norm{\mat{P}^T_{\row(i)}}_2
    &=&	 \sqrt{\sum_{j=1}^k |A_j| \left(\frac{\sum_{u\in A_j} \xx_i(u)}  {|A_j|}\right)^2}
    \leq \sqrt{\sum_{j=1}^k |A_j| \frac{\sum_{u\in A_j} \xx_i^2(u)}{|A_j|}} \\
    &=&	 \norm{{\xx}_i}_2
    \leq 1.
\end{eqnarray*}

Then by Fact~\ref{fact:rowdifference},
$\max_{i\leq k}\norm{(\mat{P}^T\cdot \mat{P})_{\row(i)}-(\mat{Q}^T\cdot \mat{Q})_{\row(i)}}_2
    \leq \sqrt{k}\left(36R^2n + 2\cdot6R\sqrt{n}\right)$.
Now by Fact~\ref{fact:eigenvalue-A-B} and the fact that $\mat{Q}^T\cdot \mat{Q}$ has
eigenvalue $0$, we know that at least one eigenvalue of $\mat{P}^T\cdot \mat{P}$ is at
most $k\left(36R^2n + 12R\sqrt{n}\right)$.
\end{proof}

Now we prove Item~(i) of the lemma. By the
assumption $\alpha_{\myin} > 10 (k\dmax)^{3/2} \sqrt{\lambda_k}$, it holds
that $\zeta=\frac{2 k \lambda_k \cdot
\dmax}{\alpha_{\myin}^2}<\frac{1}{50k^2\dmax^2}$, which implies
that
$\frac{1}{\dmax}-k(\zeta+2\sqrt{\zeta})>\frac{1}{2\dmax}$.
On the other hand, since $R=\frac{1}{36k\dmax\sqrt{n}}$, we have that
$k(36R^2n + 12R\sqrt{n}) \leq \frac{1}{2\dmax}$. Therefore,
by Claim~\ref{lem:eigen_lower} and Claim~\ref{lem:eigen_upper}, we have reached
a contradiction regarding the minimum eigenvalue of $\mat{P}^T\cdot\mat{P}$.
This implies Item (i) of the lemma, that is, for all $1\leq i<j\leq k$,
$\norm{\pp_i-\pp_j}_2 > 6R$. Now for each $1\leq j\leq k$, define $A_j':=\{u\in
V: \norm{\f(u)-\pp_j}_2\leq R\}$ as required by Item (ii). Let ${\cal
A}'=\{A_1',\ldots,A_k'\}$. By Item (i) $A_1',\cdots,A_k'$ are disjoint, proving
Item (ii).

Finally, we prove Item (iii) of the lemma. By
Lemma~\ref{lem:uniformity},
\[
\sum_{i=1}^k\norm{\xx_i-\widetilde{\xx}_i}_2^2 \leq k\cdot \zeta.
\]
On the other hand, if we let $A_\bad=\{u:u\in A_j, \norm{\f(u)-\pp_j}>R, 1\leq
j\leq k\}_2$, then
\begin{eqnarray*}
  \sum_{i=1}^k\norm{\xx_i-\widetilde{\xx}_i}_2^2
    =    \sum_{j=1}^k\sum_{u\in A_j}\sum_{i=1}^k(\xx_i(u)-\widetilde{\xx}_i(u))^2
   &=&   \sum_{j=1}^k\sum_{u\in A_j}\norm{\f(u)-\pp_j}_2^2\\
&\geq&   \sum_{u\in A_\bad}R^2=|A_\bad|\cdot R^2.
\end{eqnarray*}
Therefore,
$|A_\bad|\leq\frac{k\zeta}{R^2}
\leq \frac{2592\cdot\lambda_k\cdot \dmax^3k^4\cdot n}{\alpha_{\myin}^2}$. 

Now we observe that 
$$|{\cal A} \mytriangle {\cal A}'| \leq \sum_{j=1}^k|A_j\mytriangle A_j'|=\sum_{j=1}^k(|A_j \setminus A_j'| + |A_j'\setminus A_j|)\leq |A_\bad| + \sum_{j=1}^k|A_j'\setminus A_j|.$$

Note that for any $u\in A_j'\setminus A_j$, it holds that $u\in A_i$ for some $i\neq j$. Since $u\in A_j'$, it holds that $\norm{\f(u)-\pp_j}_2\leq R$, which implies that $\norm{\f(u)-\pp_i}_2\geq \norm{\pp_j-\pp_i}_2-\norm{\f(u)-\pp_j}_2>6R-R=5R$. This further implies that $u\in A_\bad$. Since all $A_1',\cdots,A_k'$ are disjoint, we have that $\sum_{j=1}^k|A_j'\setminus A_j|\leq |A_\bad|$.

Thus, it holds that 
$$|{\cal A} \mytriangle {\cal A}'|\leq 2|A_{\bad}|,$$
which proves the Item (iii) of the lemma.
\end{proof}

We are now ready to prove our main theorem.

\begin{proof}[Proof of Theorem \ref{thm:main}]
Let ${\cal A}=\{A_1,\ldots,A_k\}$, ${\cal A}'=\{A_1',\ldots,A_k'\}$, $R$, and $\pp_1,\ldots,\pp_k$ be as in
Lemma \ref{lem:sep}. Let
    $\eps =|{\cal A} \mytriangle {\cal A}'|/n
	= O(\frac{\lambda_k\cdot \dmax^3k^4}{\alpha_{\myin}^2})$. 
Let ${\cal C}=\{C_1,\ldots,C_k\}$ be the ordered collection of
pairwise disjoint subsets of $V(G)$ output by
the greedy spectral $k$-clustering algorithm
in Figure~\ref{fig:algorithm}. Let ${\cal P} = \{P_1,\ldots,P_k\}$ where $P_i$ is the subset, called \emph{group}, found by the algorithm at the $i$th iteration for $1\le i\leq k$.
The set of vertices not covered by any of the clusters in ${\cal A}'$
plays a special role in our argument which we denote as
$B=V(G)\setminus \left(\bigcup_{i=1}^k A_i'\right)$. Clearly,
$|B|\leq |{\cal A}\mytriangle {\cal A'}|\leq \eps  n$.

We say that a cluster $A_i'$ is {\em touched}
if the algorithm, while computing the centers,
considers a group $P_j\in {\cal P}$ with
$P_j\cap A_i'\neq \emptyset$.
For a cluster $A_i'$, let $P_{\rho(i)}$ be the group
in ${\cal P}$ that touches $A_i'$ for the {\em first time} in the
algorithm if it is touched at all.
Let $I\subseteq \{1,\ldots,k\}$ be the support
of $\rho$, that is, $\rho(i)$ exists if and only if $i\in I$.
Let $i^*=|I|$.
By permuting the indices of the clusters in ${\cal A}'$, we may assume w.l.o.g.~that $I = \{1,\ldots,i^*\}$.

First we observe that $\rho$ is a bijection on $I$.
This is because the group $P_{\rho(i)}$, $i\in I$, can intersect at most one
cluster in ${\cal A}'$. The reason is that every cluster in ${\cal A}'$ is
contained inside some ball of radius $R$, the distance between any two centers
of such balls is more than $6R$, and each $P_{\rho(i)}$ is contained inside some
ball of radius $2R$.

In case $i\in I$, we have
$|P_{\rho(i)}\setminus A_i'| \leq |B\cap P_{\rho(i)}|$.
This is because $P_{\rho(i)}$ cannot intersect any other cluster in ${\cal A}'$
but $B$. On the other hand, it holds that $|A_i'\setminus P_{\rho(i)}|\leq \eps
n$, since otherwise the algorithm could have made a better choice by taking the
entire $A_i'$ while computing $P_{\rho(i)}$. Such a choice can be made by taking
$P_{\rho(i)}$ to be all the yet unclustered points that are inside a ball of
radius $2R$ centered at any point in $A_i'$; since $A_i'$ is in a ball of radius
$R$, it follows by the triangle inequality that $A_i'$ will be contained inside
$P_{\rho(i)}$. Therefore, for every $i \leq i^*$, $|A_i'\mytriangle P_{\rho(i)}| =
|P_{\rho(i)}\setminus A_i'| + |A_i'\setminus P_{\rho(i)}|\leq |B\cap
P_{\rho(i)}|+\eps n$.

In the other case when $i\not \in I$, we claim that the cluster $A_i'$ can have
at most $2\eps n$ vertices. Suppose not. Since $\rho$ is a bijection on $I$ and
$I$ is a proper subset of $\{1,\cdots,k\}$, there is a group $P_j$ with
$j\not\in \rho(I)$, which does not intersect any cluster in ${\cal A}'$ for the
first time. Then, it has the only option of intersecting a cluster in ${\cal
A}'$ beyond the first time and/or intersect $B$. Since $|A_i'\setminus
P_{\rho(i)}|\leq \eps n$ for all $i\in I$, $P_j$ can have at most $\eps
n+|B|\leq 2\eps n$ vertices. But, the algorithm could have made a better choice
by selecting $A_i'$ while computing $P_j$ because $|A_i'|>2\eps n$ by our
assumption. We reach a contradiction.

Let $T=\{1,\cdots, k\}\setminus \{\rho(1),\cdots,\rho(i^*)\}$ be the set of indices $j$ such that $P_j$ does not intersect any cluster
in ${\cal A}'$ for the first time. For any $j\in T$,
$P_j$ can only intersect set $B$ and/or set $A_i\setminus P_{\rho(i)}$ for
some $i$ such that $\rho(i)<j$. This then gives that
$|\cup_{j\in T} P_j|\leq \sum_{j\in T}|B\cap P_j|+\sum_{i\leq i^*} |A_i'\setminus P_{\rho(i)}|$.

Using the bijectivity of $\rho$ on the set $\{1,\ldots,i^*\}$,
we have
\begin{align*}
|{\cal A}'\mytriangle {\cal P}|
  & \leq \sum_{i\leq i^*} |A_i'\mytriangle P_{\rho(i)}| + \left|\cup_{j\in T} P_j\right| + \left|\cup_{i> i^*} A_i'\right|\\
  & \leq k\eps  n + \sum_{i\leq i^*} \left|B\cap P_{\rho(i)}\right| + \sum_{j\in T}\left|B\cap P_j\right|+k\eps  n + \left|\cup_{i> i^*} A_i'\right|\\
  & \leq 2k\eps n + |B| + 2k\eps n
    \leq 5k\eps n.
\end{align*}

Now since the vertices in $V_k = V(G)\setminus \cup_{i\leq k}P_i$ are distributed to
the clusters in ${\cal P}$ to create the output clusters ${\cal C}$, we have that
$|{\cal A'}\mytriangle {\cal C}| \leq |{\cal A'}\mytriangle {\cal P}|+|V_k|
$.
Observe that by the above analysis,
$$
|V_k|\leq |B|+
\sum_{i\leq i^*} |A_i'\setminus P_{\rho(i)}| + |\cup_{i> i^*} A_i'|
\leq \eps n + k\eps n + 2k\eps n\leq 4k\eps n.
$$
It follows that $|{\cal A'}\mytriangle {\cal C}|\leq 9k\eps n$.
The following concludes the proof:
\[
|{\cal A} \mytriangle {\cal C}|
  \leq |{\cal A} \mytriangle {\cal A}'| + |{\cal A}' \mytriangle {\cal C}|
     < \eps n + |{\cal A}' \mytriangle {\cal C}|
  \leq 10k\eps n
     = O\left(\frac{\lambda_k\cdot \dmax^3k^5\cdot n}{\alpha_{\myin}^2}\right).\]
\end{proof}

\section{Implementation in Practice}\label{fast-algo}
In this section, we show how to efficiently implement the greedy algorithm in
Figure~\ref{fig:algorithm}. To this end, we discuss how to quickly compute the
first $k$ eigenvectors and how to speed up the step of finding centers by random
sampling.

\subsection{Eigenvectors Computation} In general, the eigenvectors cannot be
computed exactly in polynomial time as the entries may be irrational. However,
in our application it is sufficient to have a set of vectors that well
approximate the eigenvectors. To see this, we note that only the orthonormal
property of eigenvectors $\vv_1,\cdots,\vv_k$ and the fact that for $1\leq i\leq
k$, $\frac{\vv_i\LL_G\vv_i^T}{\vv_i\cdot\vv_i^T} \leq \lambda_i$ are needed for
all our previous results and analysis. Therefore, if we have a set of $k$
orthonormal vectors $\vvv_1,\cdots,\vvv_k$ with
$\frac{\vvv_i\LL_G\vvv_i^T}{\vvv_i\cdot\vvv_i^T} \leq 2\lambda_i$ for $1\leq
i\leq k$, then our previous results still hold if we replace $\lambda_k$ by
$2\lambda_k$. On the other hand, such set of $k$ orthonormal vectors can be
computed efficiently as shown in the following folklore lemma, the proof of
which follows from a repeated application of the near-linear time algorithm for
computing the second eigenvector given by Spielman and Teng~\cite{DBLP:journals/siammax/SpielmanT14}
and the variational characterization of eigenvalues (see e.g., Corollary 7.6.4
in \cite{Ove14:rounding}).

\begin{lemma}[\bf{folklore}]\label{lem:approx-eigen}
There exists a procedure, \textup{\textbf{ApproxEigen}}, that takes an
$n$-vertex, $m$-edge graph $G$, and an integer $k\leq n$, and returns $k$
orthonormal vectors $\vvv_1,\cdots,\vvv_k \in \mathbb{R}^n$ such that
$\frac{\vvv_i\LL_G\vvv_i^T}{\vvv_i\cdot\vvv_i^T} \leq 2\lambda_i$, for $1\leq
i\leq k$. The running time of the procedure is $\tilde{O}((m+n)k)$.
\end{lemma}

\subsection{A Faster Algorithm}\label{sec:fast-algorithm}
To further speed up the running time, we note that in each iteration $i$ such that $1\leq i\leq k$, the greedy algorithm in Figure~\ref{fig:algorithm} has to consider all vertices in $V_{i-1}$ to determine the best center.
This may cause the total time in these iterations to be as large as $\Omega(kn^2)$, which is slow in practice since $n$ can be much larger than $k$.
We show how to speed up this step via random sampling.
The main observation is that we can get good center candidates by only computing the number of near-by vertices in the embedding, for vertices from a randomly chosen subset $U_i$ of $V_i$, of size about $\Theta(\epsilon^{-1} k\log n)$, for any error parameter $\epsilon$.
This will reduce the computation time for finding the best centers from $\Theta(n^2 k^2)$ to $O(\epsilon^{-1} n k^3 \log n)$.
The procedure is summarized in Figure \ref{fig:algorithm2}.
The performance of the algorithm is given in the following theorem (that is similar to Theorem~\ref{thm:main}).

\begin{figure}
\begin{center}
\begin{tabularx}{\textwidth}{l}
\toprule
\textbf{Algorithm: Fast Spectral $k$-Clustering}\\
\textbf{Input: $n$-vertex graph $G$}\\
\textbf{Output: Partition ${\cal C}=\{C_1,\ldots,C_k\}$ of $V(G)$}\\
\midrule
Let $\vvv_1,\ldots,\vvv_k$ be the returned vectors of the procedure \textbf{ApproxEigen}$(G,k)$. \\
Let $\f:V(G)\to \mathbb{R}^k$, where $\f(u)=\deg(u)^{-1/2}\left(\vvv_1(u), \ldots, \vvv_k(u)\right)$. \\
$R=\frac{1}{36k\dmax\sqrt{n}}$ \\
$V_0 = V(G)$ \\
for $i=1,\ldots,k$ \\
~~~~~~~~Sample uniformly with repetition a subset $U_{i-1}\subseteq V_{i-1}$, $|U_{i-1}| = \Theta(\eps^{-1} k\log n)$. \\
~~~~~~~~$u_i = \argmax_{u\in U_{i-1}}|\ball(\f(u),2R)\cap \f(V_{i-1})|$ \\
~~~~~~~~~~~~$ = \argmax_{u\in U_{i-1}}|\{w\in V_{i-1} : \|\f(u)-\f(w)\|_2 \leq 2R\}|$ \\
~~~~~~~~$P_i = \f^{-1}(\ball(\f(u_i),2R))\cap V_{i-1}$ \\
~~~~~~~~$V_i = V_{i-1} \setminus P_i$ \\
Let $\nu: V_k\rightarrow \{u_1,\cdots,u_k\}$, $\nu(v)=u_i$ where $i$ is the smallest index \\
~~~~~~~~~~~~satisfying $\|\f(v)-\f(u_i)\| \leq \|\f(v)-\f(u_j)\|$ for all $j\not=i$. \\
Return $\{C_1,\ldots,C_k\} = \{P_1\cup\nu^{-1}(u_1),\ldots,P_k\cup\nu^{-1}(u_k)\}$ \\
\bottomrule
\end{tabularx}
\end{center}
\caption{The fast spectral $k$-clustering algorithm takes an $n$-vertex graph
    $G$ with maximum degree at most $\dmax$, and
    an $\eps>0$ as input, and returns a partition ${\cal C}=\{C_1,\ldots,C_k\}$
    of $V(G)$. \label{fig:algorithm2}}
\end{figure}

\begin{theorem}\label{thm:algorithm2}
Let $\eps > 0$. Let $G$ be an $n$-vertex $m$-edge graph with maximum
degree at most $\dmax$. Let $k\geq 1$, and let
${\cal A}=\{A_1,\ldots,A_k\}$ be any $(\alpha_{\myin}, \alpha_{\myout})$-strong
$k$-partition of $V(G)$ with $\alpha_{\myin}>\max\{10 (k\dmax)^{3/2},\cst
\sqrt{k^5\dmax^3}\}\cdot\sqrt{\frac{2\lambda_k}{\eps}}$ for some
sufficiently large constant $\cst$. Then, on input $G$, with high probability,
the algorithm in Figure \ref{fig:algorithm2} outputs a partition ${\cal C}$ such
that
\[ |{\cal A} \mytriangle {\cal C}|\leq \varepsilon n. \]
Furthermore, the running time of the algorithm is $\tilde{O}(mk+\eps^{-1}k^3n)$.
\end{theorem}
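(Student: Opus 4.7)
My proof strategy mirrors that of Theorem~\ref{thm:main}, with a single new ingredient: a concentration argument showing that the random subsample $U_{i-1}$ is, with high probability, a valid stand-in for $V_{i-1}$ at every iteration. I begin by applying Lemma~\ref{lem:sep} to produce the disjoint sets ${\cal A}' = \{A_1',\ldots,A_k'\}$, the centers $\pp_1,\ldots,\pp_k \in \mathbb{R}^k$ with $\|\pp_i-\pp_j\|_2 > 6R$, and the bad vertex set $B = V(G)\setminus\bigcup_i A_i'$. The strengthened hypothesis $\alpha_{\myin}\geq 9(k\dmax)^{3/2}\sqrt{\lambda_k}/\sqrt{\eps}$ squares to $\alpha_{\myin}^2 \geq 81\, k^3\dmax^3\lambda_k/\eps$, so item~(i) of Lemma~\ref{lem:sep} improves to $|{\cal A}\mytriangle{\cal A}'| \leq c_1\eps n$ for a small constant $c_1$, whence also $|B|\leq c_1\eps n$.

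Next I introduce the sampling event
\[
E = \Bigl\{\, \forall\, i \in \{1,\ldots,k-1\}\ \forall\, j\text{ with } A_j'\subseteq V_{i-1}\text{ and }|A_j'|\geq \delta n,\ \ U_{i-1}\cap A_j'\neq\emptyset\,\Bigr\},
\]
where $\delta = \Theta(\eps)$ is a threshold chosen to match the accounting below. For any such pair $(i,j)$, $|A_j'|/|V_{i-1}|\geq \delta$, so each of the $|U_{i-1}|$ independent uniform samples misses $A_j'$ with probability at most $1-\delta$; picking the hidden constant in $|U_{i-1}|=\Theta(\eps^{-1}\log n)$ large enough drives $(1-\delta)^{|U_{i-1}|}$ below $n^{-c}$ for any desired $c$, and a union bound over the at most $k^2$ relevant pairs gives $\Pr[E]\geq 1-n^{-\Omega(1)}$. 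Conditioned on $E$, the analysis of Theorem~\ref{thm:main} carries through verbatim for every ``large'' cluster $A_j'$ with $|A_j'|\geq \delta n$: at the iteration $\rho(j)$ when it is first touched, $A_j'\subseteq V_{\rho(j)-1}$, so $E$ yields some $v\in U_{\rho(j)-1}\cap A_j'$, and the triangle inequality gives $\ball(f(v),2R)\supseteq A_j'$. The greedy choice $u_{\rho(j)}$ therefore satisfies $|C_{\rho(j)}|\geq |A_j'|$, while the $>6R$ separation forces $C_{\rho(j)}\subseteq A_j'\cup B$; combining yields $|A_j'\setminus C_{\rho(j)}|\leq |B|$, matching the key inequality in the proof of Theorem~\ref{thm:main}. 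Small clusters ($|A_j'|<\delta n$), residual intersections with $B$, and the $C_k$-leftovers are absorbed exactly as in that proof, and balancing the constants $c_1,\delta$ delivers $|{\cal A}\mytriangle{\cal C}|\leq \eps n$ with high probability. The running-time bound is immediate: each iteration performs one $O(nk)$ scan of $V_{i-1}$ per sampled candidate to evaluate $|\ball(f(u),2R)\cap V_{i-1}|$, giving $O(\eps^{-1}nk\log n)$ per iteration and $O(\eps^{-1}k^2 n\log n)$ overall.

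The main delicacy I expect is the joint calibration of the threshold $\delta$, the sample-size constant, and the constant in the hypothesis on $\alpha_{\myin}$: the threshold must be small enough that small clusters collectively contribute only $O(\eps n)$ to the symmetric-difference error (pushing $\delta$ down), yet large enough that the single-sample hitting probability $\delta$ combined with sample size $\Theta(\eps^{-1}\log n)$ still beats $n^{-\Omega(1)}$ after the union bound over $k^2$ events (pushing $\delta$ up). This balance, together with the requirement that Lemma~\ref{lem:sep} already yield $|{\cal A}\mytriangle{\cal A}'|=O(\eps n)$, is what dictates the precise form of the hypothesis $\alpha_{\myin}\geq 9(k\dmax)^{3/2}\sqrt{\lambda_k}/\sqrt{\eps}$ and the hidden constants in $|U_{i-1}|=\Theta(\eps^{-1}\log n)$.
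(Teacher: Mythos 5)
Your proposal follows essentially the same route as the paper's proof sketch: invoke Lemma~\ref{lem:sep} under the strengthened hypothesis on $\alpha_{\myin}$, observe that with high probability the uniform sample hits every sufficiently large cluster of ${\cal A}'$ at the relevant iteration so that the greedy-over-samples version of the Theorem~\ref{thm:main} argument goes through, and bound the running time by $O(\eps^{-1}k^2 n\log n)$. Your explicit event $E$ with its union bound simply spells out the paper's one-line ``key observation,'' and your remaining constant bookkeeping is at the same level of precision as the paper's own sketch.
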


\begin{proof}[Proof sketch]
Note that $\alpha_\myin >10 (k\dmax)^{3/2}\cdot \sqrt{2\lambda_k}$, and
that by Lemma~\ref{lem:approx-eigen} the vectors $\vvv_1,\cdots,\vvv_k$ returned
by \textbf{ApproxEigen} are orthonormal and satisfy
$\frac{\vvv_i\LL_G\vvv_i^T}{\vvv_i\cdot\vvv_i^T} \leq 2\lambda_i$, for $1\leq
i\leq k$. Thus, we can apply the argument of the proof of Lemma~\ref{lem:sep} on
vectors $\vvv_1,\cdots,\vvv_k$ to find a collection ${\cal
A}'=\{A_1',\ldots,A_k'\}$ of pairwise disjoint subsets of $V(G)$, such that
$|{\cal A} \mytriangle {\cal A}'|/n\leq O(\frac{\lambda_k\cdot
\dmax^3k^4}{\alpha_{\myin}^2})$. Since $\cst$ is sufficiently large
and $\alpha_\myin\geq \cst \sqrt{k^5\dmax^3}\cdot
\sqrt{\frac{2\lambda_k}{\eps}}$, we can guarantee that $|{\cal A} \mytriangle
{\cal A}'|/n\leq \frac{\varepsilon}{10k}$. Let $\eps'=\frac{\varepsilon}{10k}$.
Similar to the proof of Theorem~\ref{thm:main}, we define a function
$\rho:\{1,\ldots,k\} \to \{1,\ldots,k\}$ that maps clusters in ${\cal A'}$ to
clusters in ${\cal C}$, where ${\cal C}=\{C_1,\ldots,C_k\}$ are the ordered
collection of pairwise disjoint subsets of $V(G)$ output by the greedy spectral
$k$-clustering algorithm in Figure~\ref{fig:algorithm2}. Now note that for any
$i\leq k$, a vertex from a cluster $A_i'$ of size at least $2\eps' n$ will be
sampled out with probability at least $1-\frac{1}{n^2}$. This implies that with
high probability the following holds: for each $i\in \{1,\ldots,k\}$, if $|A_i'|
\geq 2\eps' n$, then $\rho(i)$ exists, since the vertices in the set $U_{i-1}$
of the $i$th iteration of the algorithm are sampled uniformly at random. The
rest of the argument for the correctness of the algorithm is identical to the
proof of Theorem \ref{thm:main}, and eventually, we can guarantee that $|{\cal
A} \mytriangle {\cal C}|\leq 10k\eps' n = \eps n$.

For the running time of the algorithm, note that by
Lemma~\ref{lem:approx-eigen}, the procedure \textbf{ApproxEigen} takes time
$\tilde{O}((m+n)k)$.
In each iteration, we need to sample $O(\eps^{-1} k\log n)$ vertices, and for each sampled vertex we need to determine the number of near-by vertices in the embedding, which takes $O(nk)$ time, as each vertex corresponds to a point in $\mathbb{R}^k$.
This means that the computation in all the $k$ iterations
takes time $O(\eps^{-1} nk^3\log n)$. Therefore, the total running time of the
algorithm is thus $\tilde{O}((m+n)k)+O(\eps^{-1} nk^3\log
n)=\tilde{O}(mk+\eps^{-1} nk^3)$.
\end{proof}

In practice it is common to work with fixed $k$. We note that the fast
randomized algorithm runs in near-linear time for $k=O(\polylog(n))$, provided
that the user specifies $\eps = \Omega(1/\polylog(n))$.

\section{Conclusion}
In this paper we have presented a very simple spectral clustering algorithm that provably approximates any good partition of an input graph, provided that one exists. Further, our preliminary experimental results given in \ref{sec:experiments} indicate that the algorithm gives meaningful output even when the  spectral gap condition is much smaller than what our theorems require.
This provides some evidence that stronger theoretical guarantees may be obtainable, possibly by weakening the gap condition. It is also natural to wonder how small of a separation is necessary for good performance of the algorithm. We currently know of no such lower bounds.
We believe that these are interesting research directions.

One interesting remark is that a qualitatively similar result can be obtained for weighted graphs by introducing the input graph's minimum edge weight, $\wmin$, as a lower bound for the (now weighted) degree in Lemma~\ref{lem:clusterable-eigenvector} and in Claim~\ref{lem:eigen_lower}.
This results in an additional factor of $\wmin^{-3}$ in the corresponding analog of Theorem~\ref{thm:main}, provided that $R$ of Figure~\ref{fig:algorithm} is also scaled by a factor of $\sqrt{\wmin}$.

Unfortunately, due to the appearance of $\wmin$ in the denominator, the result of the previous paragraph is unstable under perturbation of the input graph by a low-weight edge. This instability appears to be an artifact of the analysis. For instance we can show that given any $(\alpha_\myin, \alpha_\myout)$-strong partition $\mathcal{A}$, there exists a choice of $R$ such that the resulting clustering is stable under perturbation by a low-weight edge. Such a result essentially follows by replacing $\wmin$ with $\deltamin$, the minimum weighted vertex degree among all induced graphs $G[A]$ for $A \in \mathcal{A}$. The obstacle with turning this into an algorithm is that to obtain the corresponding error guarantee we must scale $R$ (as it appears in  Figure~\ref{fig:algorithm}) by a factor of $\sqrt{\deltamin}$, which is not known a priori. It remains an open problem to give a similar algorithm for weighted graphs which is stable under perturbation by a low-weight edge.

\section*{Acknowledgments}
The authors wish to thank James R.~Lee for bringing to their attention a result from the latest version of \cite{DBLP:conf/stoc/LeeGT12}.
This work was partially supported by the NSF grants
CCF 1318595 and CCF 1423230. The second author acknowledges the
support of ERC grant No. 307696 and a 973 program Grant No. 2014CB340302.

\bibliographystyle{plain}
\bibliography{bibfile}
\appendix

\section{Missing Proofs from Section~\ref{sec:concentration}}\label{appendix:eigenvector}

\begin{proof}[Proof of Lemma~\ref{lem:clusterable-eigenvector}]
\label{lem:clusterable-eigenvector-proof}
For any $i\leq k$,
\begin{eqnarray*}
    \frac{\vv_i \LL_G \vv_i^T}{\vv_i\vv_i^T} = \frac{\xx_i \mat{D}^{1/2}\LL_G \mat{D}^{1/2}\xx_i^T}{\xx_i \mat{D}\xx_i^T} = \frac{\sum_{(u,v)\in E}(\xx_i(u)-\xx_i(v))^2}{\sum_{u\in V}\deg(u)\xx_i^2(u)} = \lambda_i \leq \lambda_k
    \enspace.
\end{eqnarray*}
This further gives that
\begin{eqnarray}
\label{eqn:lambdaphiout}
\sum_{(u,v)\in E}(\xx_i(u)-\xx_i(v))^2 \leq \lambda_k,
\end{eqnarray}
since $\sum_{u\in V}\deg(u)\xx_i^2(u)=\sum_{u\in V}\vv_i^2(u)=1$.
Let us recall a known result (see, e.g., \cite[(1.5), p.~5 and (1.14), p.~13]{Chu97:spectral}) that for any graph $H = (V_H,E_H)$,\footnote{We remark that in \cite{Chu97:spectral}, the summation in the denominator is over all unordered pairs of vertices, while in our context, the summation is over all possible $|V_H|^2$ vertex pairs. Therefore, a multiplicative factor $2$ appears in the numerator in Equation~(\ref{ineq:from-Chung}) compared to the form in \cite[(1.5), p.~5]{Chu97:spectral}.}
\begin{equation}
\label{ineq:from-Chung}
    \lambda_2(H)
	=
    \vol_H(V_H) \cdot
    \min_{\mathbf{f}} \left\{\frac{2\cdot \sum_{(u,v) \in E_H} (\mathbf{f}(u) - \mathbf{f}(v))^2}
	{\sum_{u,v \in V_H} (\mathbf{f}(u) - \mathbf{f}(v))^2 \deg_H(u) \deg_H(v)}
	\right\}
    \enspace,
\end{equation}
where $\lambda_2(H)$ denotes the second smallest eigenvalue of the normalized Laplacian of $H$. 
Let us consider the induced subgraph $H := G[C]$ on $C$.
Since $\phi(H) \ge \phi_{\myin}$,
Cheeger's inequality
yields $\lambda_2 (H) \ge \frac{\phi_{\myin}^2}{2}$. Therefore, if we apply this bound to inequality (\ref{ineq:from-Chung}), then,
\begin{eqnarray*}
    \vol_H(V_H) \cdot \frac{2\cdot\sum_{(u,v) \in E_H}(\xx_i(u) - \xx_i(v))^2}
	{\sum_{u,v \in V_H} (\xx_i(u) - \xx_i(v))^2 \deg_H(u) \deg_H(v)}
	\ge
    \lambda_2(H)
	\ge
    \frac{\phi_{\myin}^2}{2}
    \enspace.
\end{eqnarray*}
Combining this with the fact that \[
    \sum_{(u,v) \in E_H}(\xx_i(u) - \xx_i(v))^2 \le \sum_{(u,v) \in E_G} (\xx_i(u) - \xx_i(v))^2 \le \lambda_k,
\] where the last inequality follows from inequality (\ref{eqn:lambdaphiout}), we have that
\begin{displaymath}
    \sum_{u,v \in V_H} (\xx_i(u) - \xx_i(v))^2 \deg_H(u) \deg_H(v)
	\le
    \frac{4\lambda_k \vol_H(V_H)}{\phi_{\myin}^2}
    \enspace.
\end{displaymath}
Next, since $\phi(H) \ge \phi_{\myin} > 0$ implies that $\deg_H(u)\geq 1$ for all $u \in V_H$. Using the bound above we obtain:
\begin{eqnarray*}
\sum_{u,v \in V_H} (\xx_i(u) - \xx_i(v))^2
	&\le&
    \sum_{u,v \in V_H} (\xx_i(u) - \xx_i(v))^2 \deg_H(u) \deg_H(v) \\
	&\le&
    \frac{4\lambda_k \vol_H(V_H)}{\phi_{\myin}^2}
    \le
    \frac{4\lambda_k \vol(C)}{\phi_{\myin}^2}.
\end{eqnarray*}
This completes the proof of the lemma.
\end{proof}

\section{Missing Proofs from Section~\ref{sec:clustering}}
\label{sec:facts}

\begin{proof}[Proof of Fact~\ref{fact:eigenvalue-A-B}]
Since $\max_{i\leq p}\norm{\mat{X}_{\row(i)}-\mat{Y}_{\row(i)}}_2\leq \delta$,
then the Frobenius norm $\norm{\mat{X}-\mat{Y}}_F$ of $\mat{X}-\mat{Y}$ is at
most $\sqrt{p}\cdot\delta$, and therefore, the induced 2-norm
$\norm{\mat{X}-\mat{Y}}_2$ of $\mat{X}-\mat{Y}$ is at most
$\sqrt{p}\cdot\delta$. By Weyl's inequality~\cite{Tao12:randommatrix}, for any
$i$, $|\eta_i(\mat{X})-\eta_i(\mat{Y})|\leq |\eta_1(\mat{X}-\mat{Y})| \leq
\norm{\mat{X}-\mat{Y}}_2\leq \sqrt{p}\cdot\delta$.
\end{proof}

\begin{proof}[Proof of Fact~\ref{fact:rowdifference}]
For simplicity, let $\mat{X}_i:=\mat{X}_{\row(i)}$ and
$\mat{Y}_i:=\mat{Y}_{\row(i)}$. Note that the $i,j$th entry of $\mat{X}\cdot
\mat{X}^T$ and $\mat{Y}\cdot \mat{Y}^T$ are $\langle \mat{X}_i,\mat{X}_j\rangle$
and $\langle \mat{Y}_i,\mat{Y}_j\rangle$, respectively, and that
\begin{eqnarray*}
\abs{ \langle \mat{X}_i,\mat{X}_j\rangle - \langle \mat{Y}_i,\mat{Y}_j\rangle}
&=&
\abs{ \langle \mat{X}_i,\mat{X}_j\rangle - \langle \mat{Y}_i - \mat{X}_i + \mat{X}_i, \mat{Y}_j - \mat{X}_j + \mat{X}_j\rangle}\\
&\leq &
\abs{\langle \mat{Y}_i - \mat{X}_i, \mat{Y}_j - \mat{X}_j\rangle} + \abs{\langle \mat{Y}_i - \mat{X}_i, \mat{X}_j\rangle} + \abs{\langle \mat{X}_i, \mat{Y}_j - \mat{X}_j\rangle}\\
&\leq&
\norm{\mat{Y}_i - \mat{X}_i}_2 \norm{\mat{Y}_j - \mat{X}_j}_2 + \norm{\mat{Y}_i - \mat{X}_i}_2 \norm{\mat{X}_j}_2 + \norm{\mat{X}_i}_2 \norm{\mat{Y}_j - \mat{X}_j}_2\\
&\leq&
\delta^2 + 2\gamma \delta.
\end{eqnarray*}
Therefore,
$\max_{i\leq p}\norm{(\mat{X}\cdot \mat{X}^T)_{\row(i)}-(\mat{Y}\cdot \mat{Y}^T)_{\row(i)}}_2 \leq \sqrt{p}(\delta^2 + 2\gamma \delta)$. 
\end{proof}

\section{Experimental Evaluation}\label{sec:experiments}
Results from our greedy $k$-clustering implementation are shown
in Figures~\ref{experiment1}, \ref{experiment2}, \ref{experiment3}.
Cluster assignments for graphs are shown as colored nodes.
In the case where the graph comes from a triangulated surface, we have extended
the coloring to a small surface patch in the vicinity of the node. Each
experiment includes a plot of the eigenvalues of the normalized Laplacian (y-axis), by eigenvector number (x-axis).
A small rectangle on each plot highlights the corresponding spectral gap
between $k$ and $k+1$.

\paragraph*{Multiple spectral gaps}
Recall that graphs
which have a sufficiently large spectral gap between the $k$-th and $(k+1)$-st
eigenvalues admit a strong clustering~\cite{DBLP:conf/soda/GharanT14}.
Figure~\ref{experiment1} shows the result of our algorithm on a graph
with two prominent spectral gaps, $k=2$ (left) and $k=5$ (right).

This graph is sampled from the following generative model.
Let $C_1,\ldots,C_5$ be disjoint vertex sets of equal size, depicted as circles.
Every edge with both endpoints in the same $C_i$ appears with probability $p_1$,
every edge between $C_1\cup C_2$ and $C_3\cup C_4\cup C_5$ appears with
probability $p_3$, and every other edge appears with probability $p_2$, for
some $p_1\gg p_2 \gg p_3$. The resulting graph admits a strong $k$-partition,
for any $k\in \{2,5\}$. This fact is reflected in the output of our algorithm.

We remark that to achieve a sufficiently large gap at $k=5$ many intra-circle
edges are necessary, which makes the resulting figures too dense.
To make the plots readable, we have displayed only a subsampling of these edges.

\begin{figure}
\begin{center}
\begin{tabular}{ll}
\includegraphics[width=0.47\textwidth]{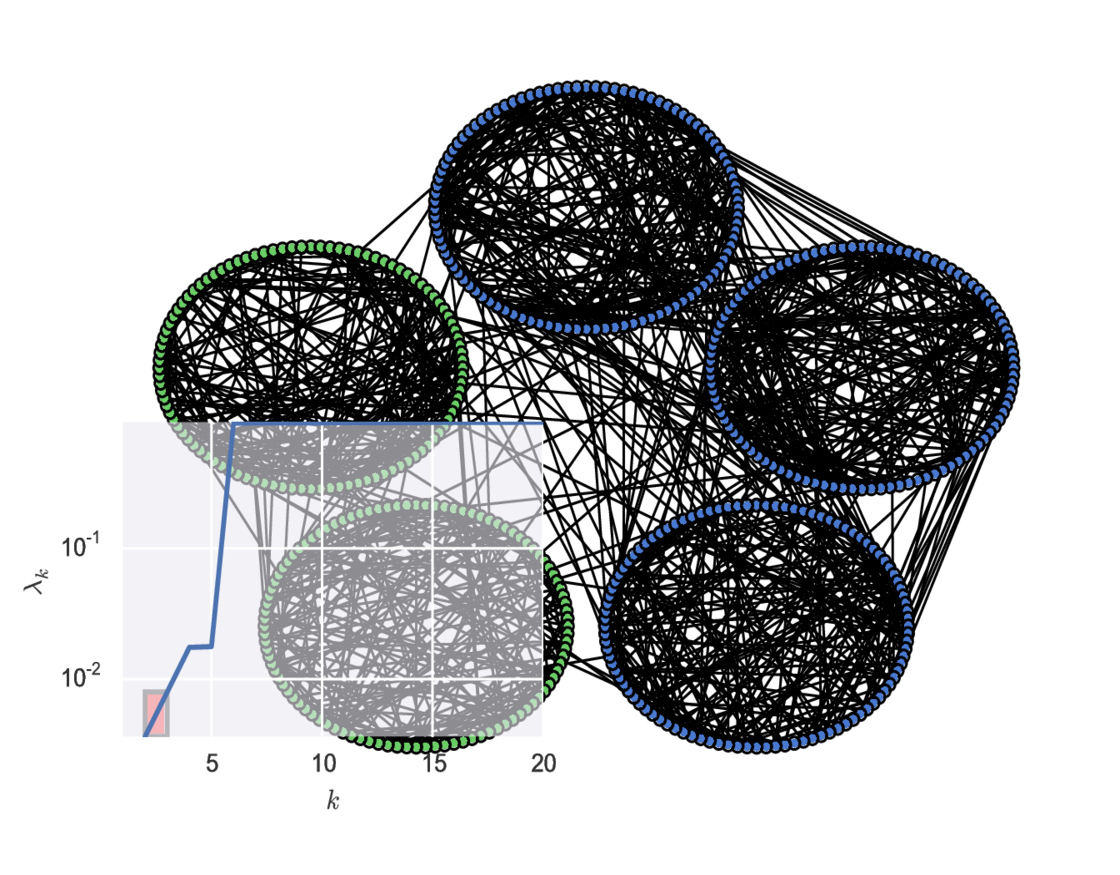} &
\includegraphics[width=0.47\textwidth]{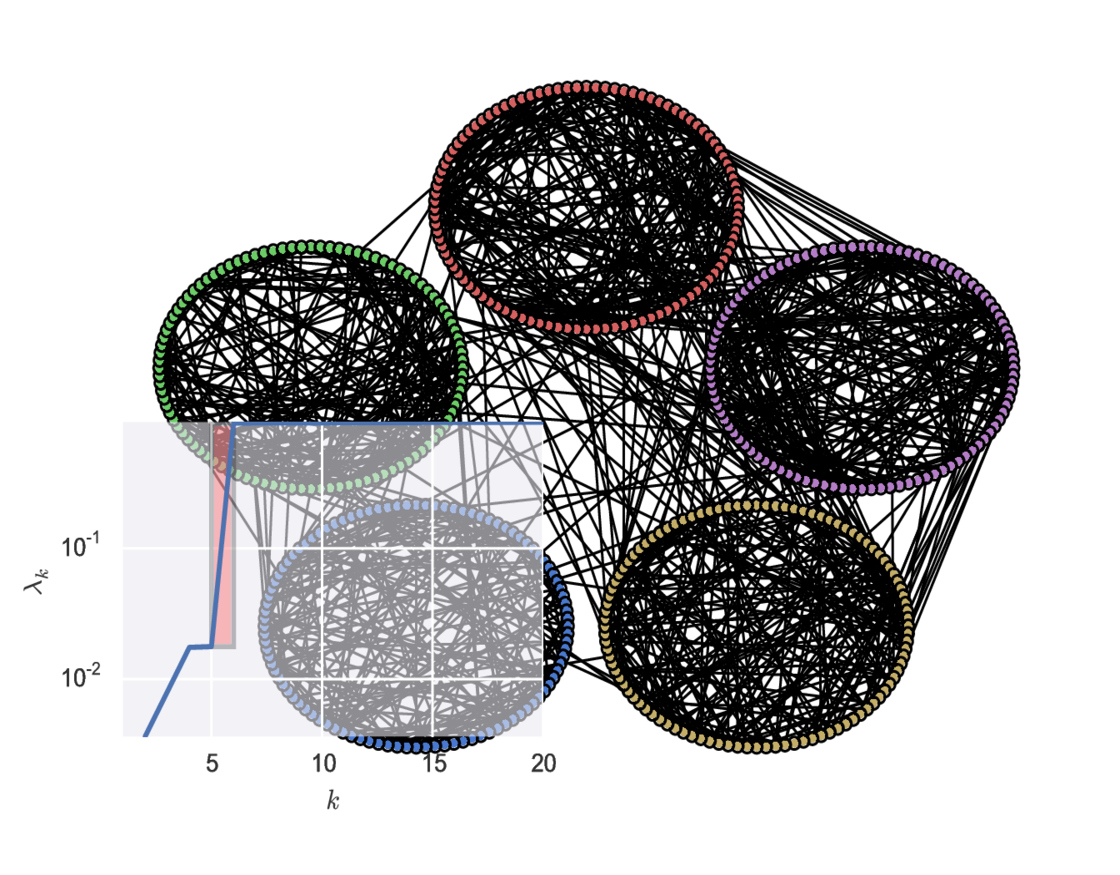}
\end{tabular}
\end{center}
\caption{Clustering a graph with multiple spectral gaps. The graph has $n=500$
nodes, with edges chosen according to our model below. Clustering is performed
with $R=18/(k\dmax\sqrt{n})$ for $k=2$ and $k=5$.}
\label{experiment1}
\end{figure}

\begin{figure}
\begin{center}
\begin{tabular}{ll}
\includegraphics[width=0.45\textwidth]{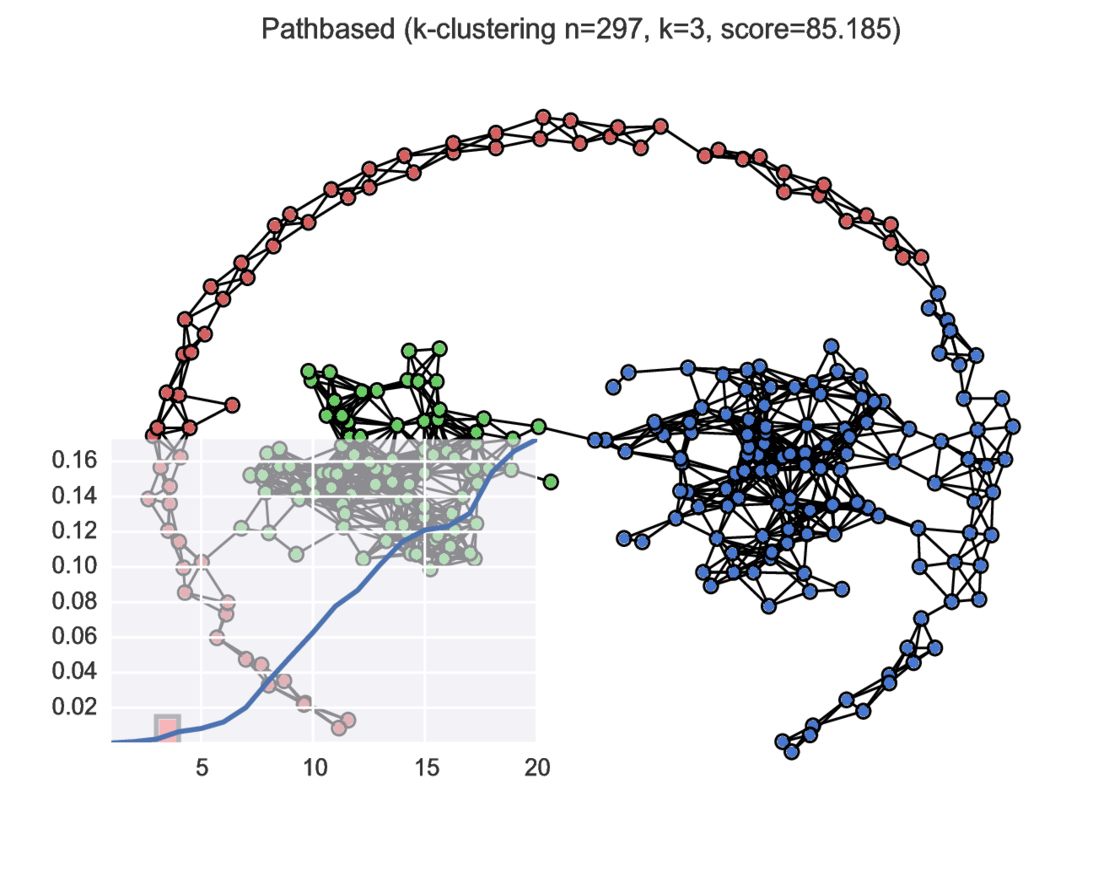} &
\includegraphics[width=0.45\textwidth]{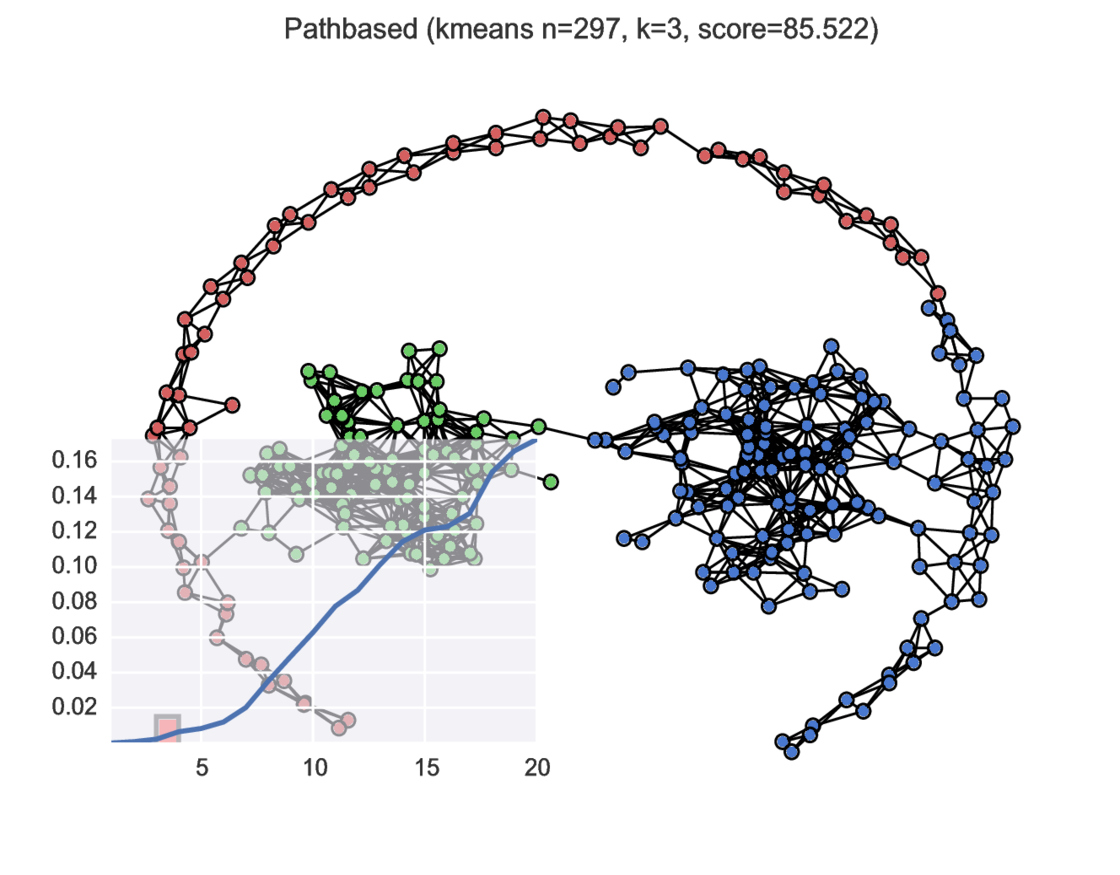} \\
\includegraphics[width=0.45\textwidth]{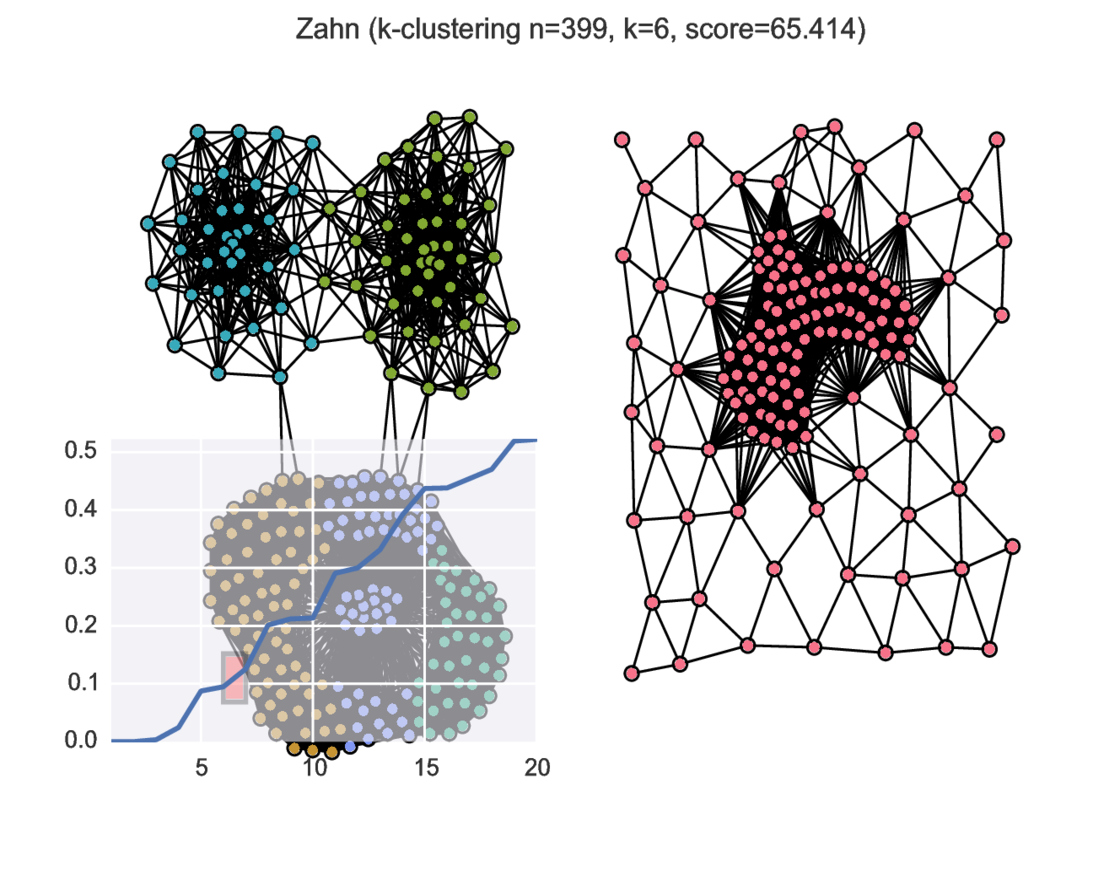} &
\includegraphics[width=0.45\textwidth]{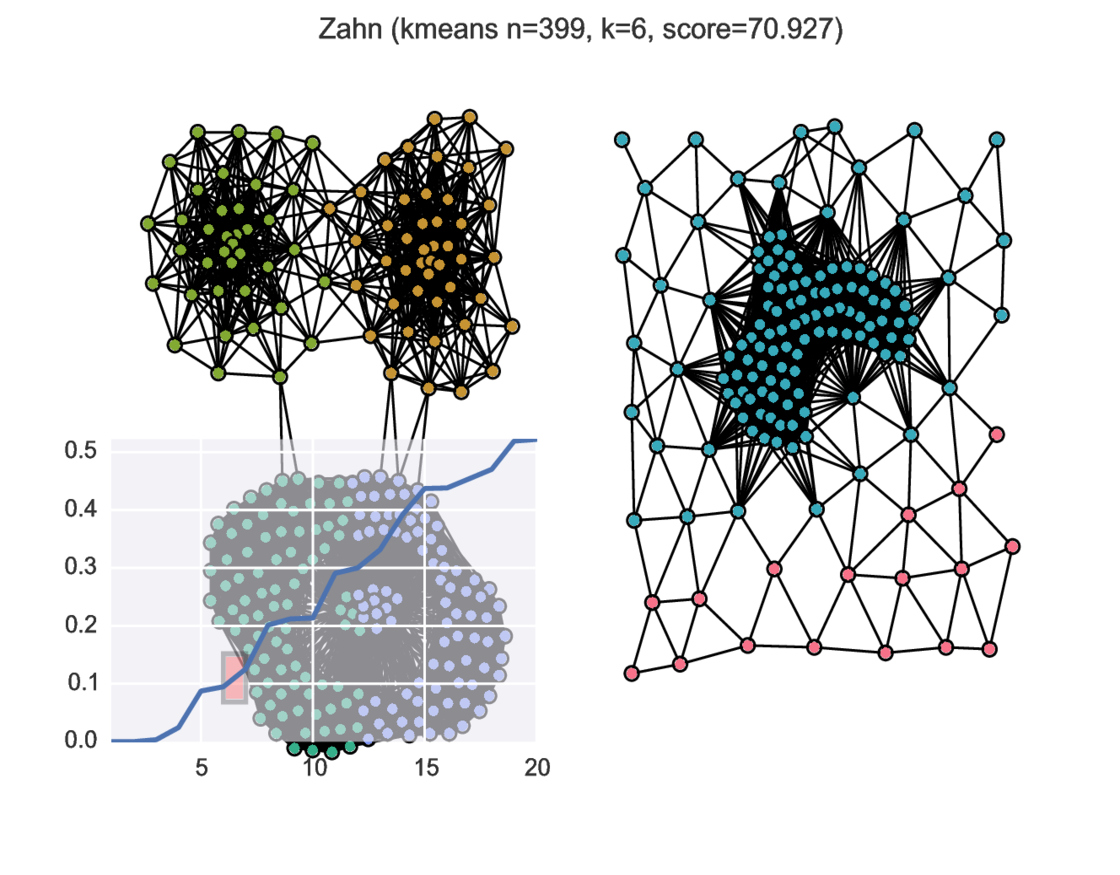} \\
\includegraphics[width=0.45\textwidth]{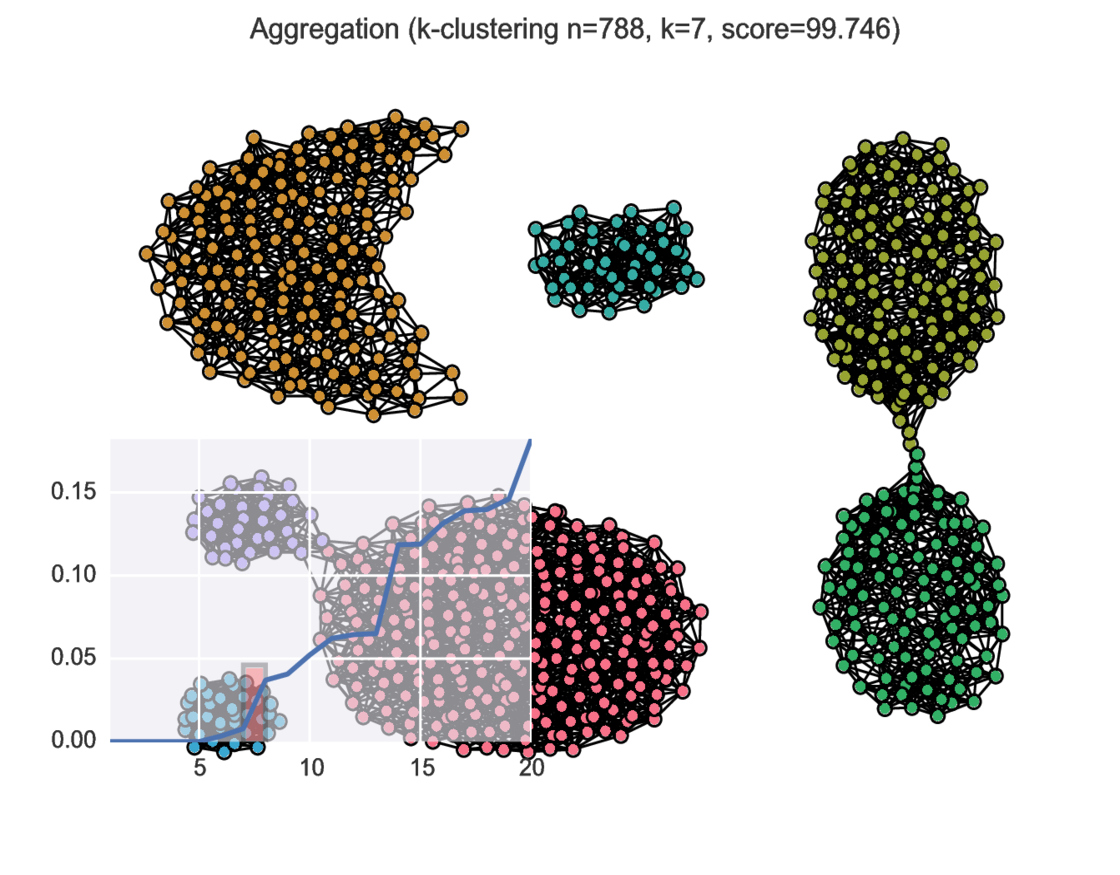} &
\includegraphics[width=0.45\textwidth]{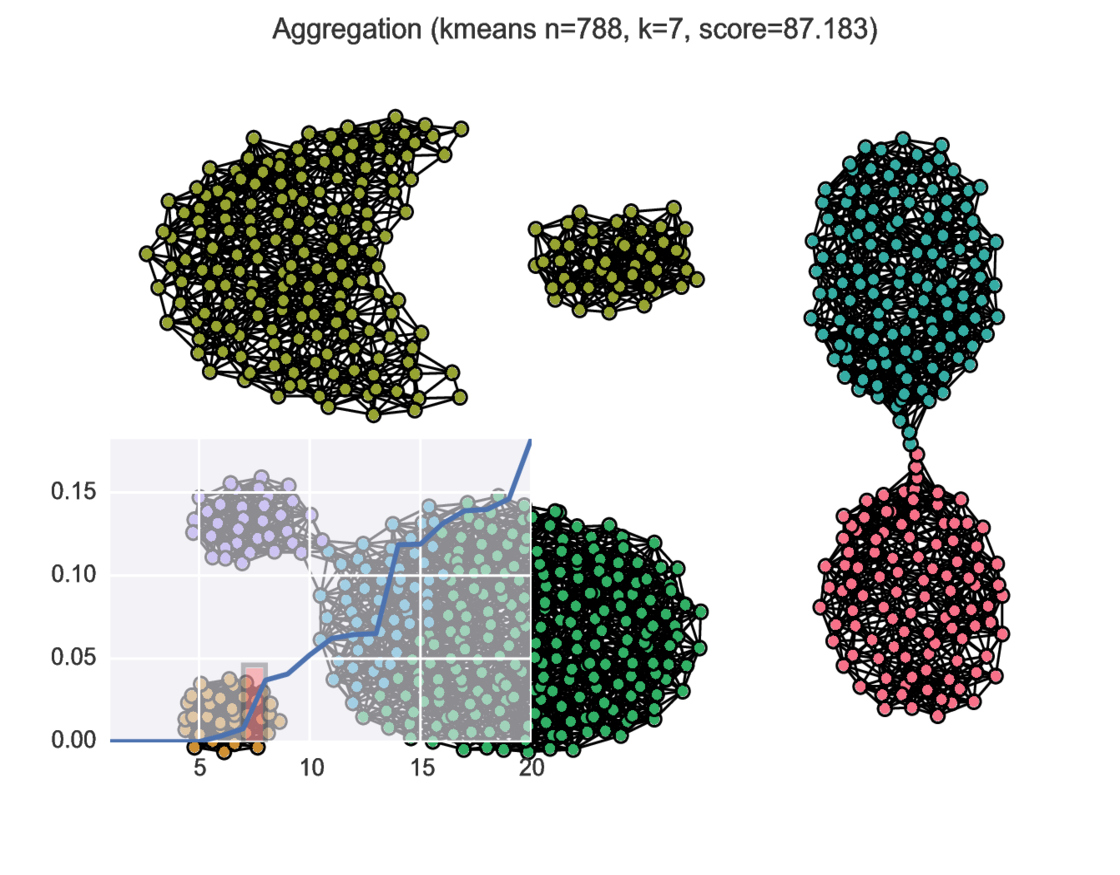} \\
\end{tabular}
\end{center}
\caption{Visual comparison of cluster assignments between Greedy $k$-Center
and $k$-means. Clustering was performed on graphs with radius $R=18/(k\dmax\sqrt{n})$.
The scores in the titles indicate percent correct classification with respect
to the ground truth.}
\label{experiment2}
\end{figure}

\paragraph*{Comparison with $k$-means}
In Figure~\ref{fig:expresults} we compare the greedy approach to $k$-means on
the spectral embedding. Our experiments build a graph on two and
three-dimensional euclidean point-cloud data by selecting a threshold value,
$D$, and connecting any two points which are no further than $D$. Additionally,
any singletons are removed. A key feature of the chosen point-cloud data is that
it comes with ground truth labeling which we lift to the corresponding
graph.

\begin{figure}
\begin{center}
\begin{tabular}{ l | c c c c c }
  Data set     &  n   & k & D   & Greedy $k$-C & $k$-means	\\
\hline
  pathbased   & 297  & 3 & 2   & 85.185     & 85.522        \\
  jain        & 373  & 2 & 3   & 100        & 100           \\
  zahn        & 399  & 6 & 3   & 65.414     & 71.471        \\
  aggregation & 788  & 7 & 2   & 99.746     & 96.063        \\
  LSun        & 400  & 3 & 1   & 93.250     & 93.250        \\
  Tetra       & 400  & 4 & 1   & 100        & 100           \\
  Hepta       & 212  & 7 & 5   & 44.811     & 57.849        \\
  Chainlink   & 1000 & 2 & 1   & 80.600     & 83.196        \\
  EngyTime\footnotemark
              & 4082 & 2 & 0.5 & 96.546     & 96.423        \\
  TwoDiamonds & 800  & 2 & 1   & 100        & 100           \\
\end{tabular}
\end{center}
\caption{Percentage agreement to ground truth for Greedy $k$-Clustering and
$k$-means. The results reported in the $k$-means column are the mean percent
agreement over $100$ trials where $k$-means was initialized by random selection
of $k$ points in the embedding. The $R$ parameter for Greedy $k$-Clustering was
given by the formula $R=18/(k\dmax\sqrt{n})$. The first four input
graphs are based on data sets of the same name from \cite{dataset:shapesets},
the rest are based on similarly named point-cloud data from FCPS.
\cite{dataset:FCPS}. \label{fig:expresults}}
\end{figure}
\footnotetext{Only the largest component was used for clustering.}

\begin{figure}
\begin{center}
\begin{tabular}{ll}
\includegraphics[width=0.4\textwidth]{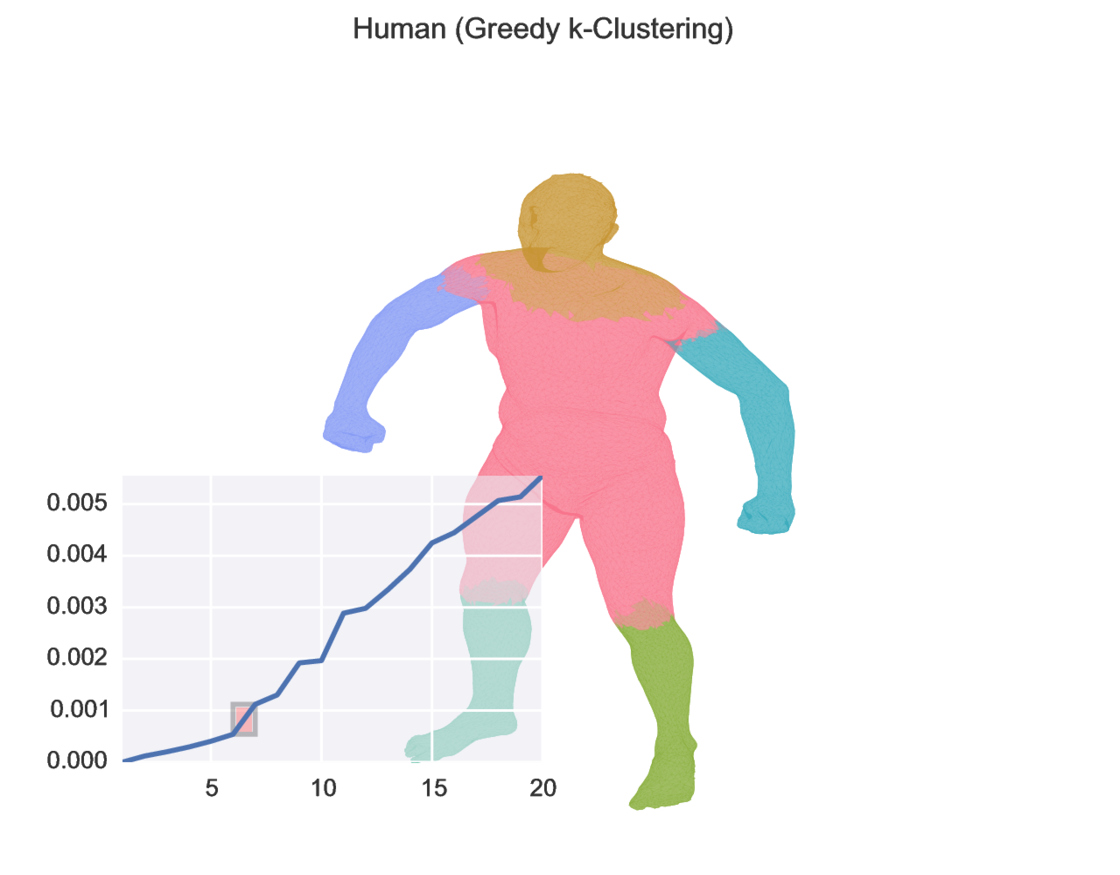} &
\includegraphics[width=0.4\textwidth]{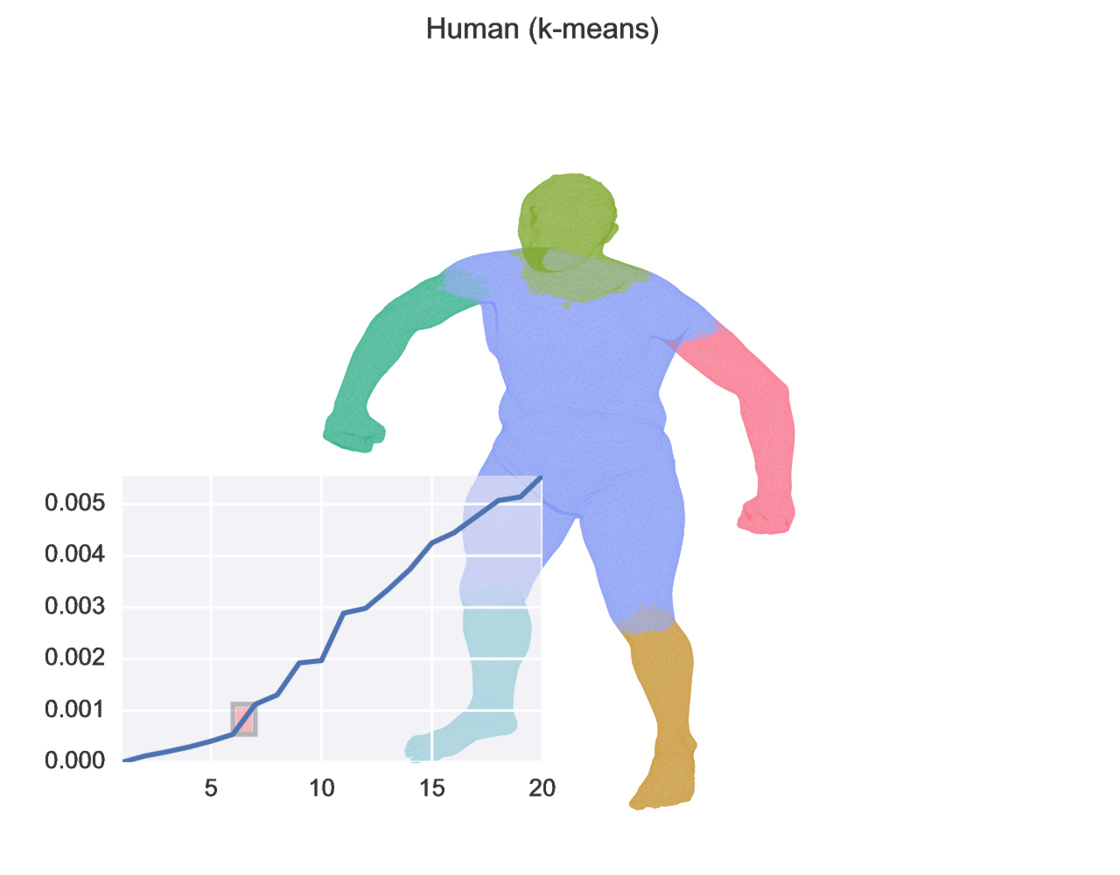} \\
\end{tabular}
\end{center}
\caption{A $k=6$ clustering performed on the $1$-skeleton of triangulated mesh
($n=12500$), using $R=30/(k\dmax\sqrt{n})$. Here the two algorithms agree
on $97.176\%$ of the vertices.}
\label{experiment3}
\end{figure}

\end{document}